\documentclass[a4paper,11pt]{article}

\usepackage{fullpage}
\usepackage{authblk}

\usepackage[utf8]{inputenc} % allow utf-8 input
\usepackage[T1]{fontenc}    % use 8-bit T1 fonts
\usepackage{hyperref}       % hyperlinks
\usepackage{url}            % simple URL typesetting
\usepackage{booktabs}       % professional-quality tables
\usepackage{amsfonts}       % blackboard math symbols
\usepackage{nicefrac}       % compact symbols for 1/2, etc.
\usepackage{microtype}      % microtypography
\usepackage{xcolor}         % colors

\hypersetup{
	colorlinks   = true,
    linkcolor    = black, % color of internal links
	urlcolor     = black, % color of external links
	citecolor    = blue  % color of links to bibliography
}

\usepackage{algorithm}
\usepackage{algorithmic}
\usepackage{bm, bbm}
\usepackage{amssymb,amsmath,amsfonts,amstext,amsthm, mathtools}
\usepackage{cleveref}
% \Crefname{equation}{Eq.}{Eqs.}
\Crefname{equation}{}{}

\usepackage{natbib}
\usepackage[inline]{enumitem}
\usepackage{booktabs, makecell}

\renewcommand{\cite}[1]{\citep{#1}}

\usepackage{pgfplots}
\usepackage{tikz}
\usetikzlibrary{intersections,positioning,calc,shapes.misc}

\usepackage{breakcites}

\usepackage{cleveref}
\usepackage{thm-restate}

\usepackage{framed}

\newcommand{\vv}{\mathbf{v}}
% % !TEX root =  main_bayesian.tex
% %%%%%%%%%%%%%%%%%%%%%%%%%%%%%%%%%%%%%%%%%
\usepackage{bm}

% ---------------------------
% SHORTHANDS

% \renewcommand{\aa}{\mathbf{a}}
\newcommand{\va}{\mathbf{a}}
\newcommand{\bb}{\mathbf{b}}

\newcommand{\rr}{\mathbf{r}}

\renewcommand{\vv}{\mathbf{v}}

\newcommand{\xx}{\mathbf{x}}
\newcommand{\yy}{\mathbf{y}}
\newcommand{\zz}{\mathbf{z}}

% mathcal

\newcommand{\calB}{\mathcal{B}}

\newcommand{\calM}{\mathcal{M}}

\newcommand{\calO}{\mathcal{O}}
\newcommand{\calP}{\mathcal{P}}

\newcommand{\calV}{\mathcal{V}}
\newcommand{\calW}{\mathcal{W}}

% tilde

\newcommand{\tda}{\tilde{a}}

\newcommand{\wt}{\widetilde}
\newcommand{\wh}{\widehat}

\newcommand{\whp}{\widehat{p}}

\newcommand{\whV}{\widehat{V}}

\newtheorem{lemma}{Lemma}

\newtheorem{claim}{Claim}

\theoremstyle{definition}
\newtheorem{definition}[lemma]{Definition}
\newtheorem{example_}[lemma]{Example}

% comments

\newcommand{\hltb}[1]{{\color{blue}#1}}

% --
\newcommand{\reg}{\mathrm{Reg}}
 % subgame

\newcommand{\abs}[1]{\left| #1 \right|}
\DeclareMathOperator*{\argmax}{arg\,max}
\DeclareMathOperator*{\argmin}{arg\,min}

\newcommand{\E}{\mathbb{E}}
\newcommand{\given}{{\,|\,}}

% {align} environment but with \textstyle

\newenvironment{talign*}
 {\csname align*\endcsname}
 {\endalign}

\newcommand{\sfP}{\mathsf{P}}
\newcommand{\sfA}{\mathsf{A}}

\newcommand{\poly}{\operatorname{poly}}
% -
\newcommand{\vomega}{\bm{\omega}}

\newcommand{\wtomega}{\widetilde{\omega}}
\renewcommand{\varpi}{\bar{\pi}}

\sloppy % Fix out-of-margin inline equations

\title{Stochastic Principal-Agent Problems:\\ Efficient Computation and Learning}

\author[1]{Jiarui Gan}
\author[2]{Rupak Majumdar}
\author[3]{Debmalya Mandal}
\author[2]{Goran Radanovic}

\affil[1]{University of Oxford}
\affil[2]{Max Planck Institute for Software Systems}
\affil[3]{University of Warwick}

\date{}

\begin{document}

\maketitle

\begin{abstract}
We introduce a stochastic principal-agent model. A principal and an agent interact in a stochastic environment, each privy to observations about the state not available to the other. The principal has the power of commitment, both to elicit information from the agent and to provide signals about her own information. The players communicate with each other and then select actions independently. Each of them receives a payoff based on the state and their joint action, and the environment transitions to a new state. The interaction continues over a finite time horizon. Both players are {\em far-sighted}, aiming to maximize their total payoffs over the time horizon. The model encompasses as special cases extensive-form games (EFGs) and stochastic games of incomplete information, partially observable Markov decision processes (POMDPs), as well as other forms of sequential principal-agent interactions, including Bayesian persuasion and automated mechanism design problems. 

We consider both the computation and learning of the principal's optimal policy. Since the general problem, which subsumes POMDPs, is intractable, we explore algorithmic solutions under {\em hindsight observability}, where the state and the interaction history are revealed at the end of each time step.
Though the problem becomes more amenable under this condition, the number of possible histories remains exponential in the length of the time horizon, making approaches for EFG-based models infeasible. We present an efficient algorithm based on constructing the {\em inducible value sets}. The algorithm computes an $\epsilon$-approximate optimal policy in time polynomial in $1/\epsilon$. 
Additionally, we show an efficient learning algorithm for a typical episodic reinforcement learning setting where the transition probabilities are unknown. The algorithm guarantees sublinear regret $\widetilde{\mathcal{O}}(T^{2/3})$ for both players over $T$ episodes.
\end{abstract}

\section{Introduction}

Many problems in economic theory involve sequential reasoning between multiple parties with asymmetric access to information \cite{ross1973economic,JM76,BM04,LS18}.
For example, in contract theory, one party (the principal) delegates authority and decision-making power to another (the agent), and the goal is to design mechanisms to ensure that the agent's actions align with the principal's utilities.
This broad class of \emph{principal-agent problems} lead to many research questions about information design and optimal strategic behaviors, with broad-ranging applications from governance and public administration to e-commerce and financial services. 
In particular, \emph{algorithmic} techniques for optimal decision making and learning are crucial for obtaining effective solutions to real-world problems in this domain.

In this paper, we consider a general framework for stochastic principal-agent problems.
We study algorithmic problems related to the computation and learning of optimal solutions under this framework. 
In this framework, the interaction between the principal and the agent takes place in a stochastic environment over multiple time steps.
In each step, both players are privy to information not available to the other and make partial observations about the environment. 
The players can communicate their private information to influence each other and, based on this communication, play actions that jointly influence the state of the environment.
Each player has their own payoff, and we study the \emph{general sum case} where the payoffs need not sum to zero.
The players are \emph{far-sighted}: their goal is to maximize their expected total payoffs over the entire horizon of the game.
Technically, these are stochastic games with partial information on both sides \cite{AumannMaschler,MertensSolanZamir}.

In line with the principal-agent framework, we assume that the principal has the power of \emph{commitment}, both to elicit information from the agent and to provide signals about her own information to coordinate their joint actions. 
A commitment is a binding agreement for the principal to act according to the committed strategy; technically, we have a
Stackelberg game \cite{von2010market}.
The agent acts optimally in response to the commitment, deciding what information to share and what actions to perform at their own discretion. 
As a result, our model incorporates both sequential Bayesian persuasion (or information design \cite{kamenica2011bayesian}) \cite{gan2022sequential,wu2022sequential} and sequential mechanism design \cite{zhang2021automated}
as special cases, as well as extensive-form games (EFGs) and stochastic games with coordinated communication, and partially observation Markov decision processes (POMDPs). 
The model is strictly more expressive than EFG-based models of similar principle-agent problems as history sequences are represented more concisely through a Markov process. The number of possible histories may therefore grow exponentially as the length of the time horizon increases, while in EFGs this is normally bounded by the input size (i.e., the size of the game tree). For this reason, we coin the term {\em stochastic principal-agent problem}, for the similarity of this model to {\em stochastic games} \cite{shapley1953stochastic}.

We focus on a finite time horizon and the total payoff.
We consider both the \emph{full information} setting, where all parameters of the underlying game are known to both players and their goal is to design optimal policies, and the \emph{partial information} setting, where the parameters are not given beforehand and have to be learned by interacting in the environment.
Based on these two settings, we design algorithms to compute or learn the principal's optimal policy, which is in general history-dependent.

\subsection{Our Results}

Since the general setting of our model subsumes POMDPs---which are PSPACE-hard when the horizon is finite \cite{papadimitriou1987complexity}---we explore a \emph{hindsight observability} condition in the literature on POMDPs \cite{lee2023learning}, whereby the hidden interaction history is revealed to both players at the end of each time step.
Under this condition, our first main result is an efficient near-optimal algorithm for computing the principal's optimal policy. The algorithm is based on a dynamic programming approach which works by constructing the {\em inducible value sets}. The algorithm computes an $\epsilon$-approximate optimal policy, that is optimal up to any desired (additive) approximation error $\epsilon$, in time polynomial in $1/\epsilon$.
The key technical difficulty in designing the algorithm is to characterize the one-step solutions in the dynamic programming formulation, as projections of convex polytopes that can be efficiently approximated up to an additive error.

Next, we study the partial-information case. 
We consider a typical reinforcement learning (RL) setting where the transition model is not given beforehand and needs to be learned by interacting with the environment. The setting is episodic and consists of $T$ episodes. 
As our second main result, we present a learning algorithm that guarantees sublinear $\wt{\calO}(\poly(M,H) \cdot T^{2/3})$ regret for both players, where $M$ is the size of the model and $H$ is the horizon length of each episode. The bound matches a $\Omega(T^{2/3})$ lower bound presented in previous work for a sequential persuasion model \cite{bernasconi2022sequential}.
Our learning algorithm uses {\em reward-free exploration} from the recent RL literature, and relies on efficient computation of optimal policies that are {\em approximately} incentive compatible. The latter is achieved via a variant of our algorithm for the full-information case.

\subsection{Related Work}

The principal-agent problem is a well-known concept in economics studies~\citep[see, e.g.,][]{ross1973economic,myerson1982optimal,MR86,makris2003theory}.
Models featuring sequential interactions have also been proposed and studied \cite{myerson1986multistage,forges1986approach}. 
Our work follows the same modeling approach as these early works and generalizes the one-shot versions of the respective types of principal-agent problems, including information design \cite{kamenica2011bayesian}, automated mechanism design \cite{sandholm2003automated}, as well as mixtures of the two \cite{myerson1982optimal,bayesian2022castiglioni,gan2022optimal}.
In the more recent literature, there has been a growing interest in the algorithmic aspects of these sequential models. The computation and learning of sequential extensions of various forms of principal-agent problems have been studied (e.g., information design \cite{celli2020private,gan2022bayesian,gan2022sequential,wu2022sequential,bernasconi2022sequential}, automated mechanism design \cite{zhang2021automated,cacciamani2023online}, other types of sequential Stackelberg games \cite{letchford2010computing,letchford2012computing,bovsansky2017computation,harris2021stateful,collina2023efficient}, and even more recently,  contract design \cite{ivanov2024principal}).

Our model can be viewed as a generalization of the above works, incorporating a stochastic setting with a finite horizon and far-sighted players. Specifically, \citet{gan2022bayesian} first introduced an infinite-horizon information design model based on an MDP. They showed that optimal {\em stationary} strategies are inapproximable, unless the receiver is myopic. This work left open the tractability of optimal {\em history-dependent} strategies, especially in finite-horizon models, which we consider in this paper. 
\citet{wu2022sequential} later studied the reinforcement learning problem against a myopic agent in the same sequential information design model. 
\citet{bernasconi2022sequential} also studied a model based on an EFG and presented efficient computation and learning algorithms. Similar EFG-based models have also been explored in the recent literature \cite{zhang2022polynomial,zhang2024computing}. 
EFGs are less expressive than MDP-based models since possible history sequences are explicitly given in the model. The number of possible histories is bounded by the size of the problem as a result, where as this can be exponential in an MDP. 
Hence, efficient algorithms for EFG-based models do not directly translate to efficient algorithms for our MDP-based model. 
In the domain of automated mechanism design, \citet{zhang2021automated} studied a finite-horizon model that is a POMDP for the principal and MDP for the agent. They presented an LP (linear program) for computing  optimal mechanisms, though the size of the LP is exponential in the size of the problem. 

Our algorithm for computing optimal history-dependent strategies leverages the technique of approximating inducible value sets using convex polytopes. Similar techniques have been proposed in earlier works by \citet{dermed2009solving} and \citet{macdermed2011quick} to compute optimal correlated equilibria of stochastic games. We extend these techniques into the principal-agent setting, with adaptions that ensure exact incentive compatibility (IC) in the full-information setting. 
In a closely-related work concurrent to ours, \citet{bernasconi2024persuading} used a similar approximation approach to solve an information design problem (as a special case of our model). Compared to their results, our algorithm guarantees {\em exact} IC, with a simpler approach. Moreover, we also study the learning setting, in addition to the full-information computation problem they focused on.
We note that while all the above works (including ours) only guarantee near-optimality, exact solutions are possible in some settings. In a recent work, \citet{zhang2023efficiently} presented a sophisticated exact algorithm for computing optimal correlated equilibria in two-player turn-based stochastic games.

\section{Preliminaries}
\label{sec:prelim}

A principal ($\sfP$) and an agent ($\sfA$) interact in a finite-horizon POMDP $\calM = \langle S, A, \Omega, p, \rr \rangle$, where:
$S$ is a finite state space; 
$A = A^\sfP \times A^\sfA$ is a finite joint action space;
$\Omega = \Omega^\sfP \times \Omega^\sfA$ is a finite joint observation space;
$p = (p_h)_{h=0}^{H-1}$ and $\rr = (\rr_h)_{h=1}^H$ are two tuples, each consisting of an element for every time step $h$.
Specifically, $p_0 \in \Delta(S\times \Omega)$ is a distribution of the initial state-observation pairs, and each $p_h$, $h\ge1$, is a transition function $p_h : S \times A \to \Delta(S\times \Omega)$.
Each $\rr_h = (r_h^\sfP, r_h^\sfA)$ is a pair of reward functions $r_h^\sfP: S \times A \to \mathbb{R}$ and $r_h^\sfA: S \times A \to \mathbb{R}$, for the principal and the agent, respectively.
W.l.o.g., we assume that all rewards are in $[0,1]$, and all rewards generated in the last time step $H$ are $0$.

The interaction proceeds as follows.
At the beginning, an initial state-observation pair $(s_1,\vomega_1) \sim p_0$ is drawn. 
Then, each time step $h = 1,\dots, H$ involves the following stages.
\begin{enumerate}%[leftmargin=5mm,itemsep=1mm]
\item {\bf Observation:}
The principal and the agent observe, privately, $\omega_h^\sfP$ and $\omega_h^\sfA$, respectively (but not $s_h$).

\item {\bf Communication:}
The principal elicits the agent's observation. The agent reports $\wtomega_h^\sfA \in \Omega^\sfA$ (possibly different from $\omega_h^\sfA$). Then, based on $\omega_h^\sfP$ and $\wtomega^\sfA$ the principal sends a coordination signal $a_h^\sfA$, which as we will demonstrate is w.l.o.g. an action she recommends the agent to play. 
The agent observes the recommendation $a_h^\sfA$.

\item {\bf Action:}
Based on the information exchange above, the principal and the agent, simultaneously, each perform an action, say $a_h^\sfP$ and $\tda_h^\sfA$, respectively. (The action $\tda_h^\sfA$ the agent actually performs may be different from the recommended one $a_h^\sfA$.)

\item {\bf Rewards and next state:}
Rewards $r_h^\sfP(s_h, a_h^\sfP, \tda_h^\sfA)$ and $r_h^\sfA(s_h, a_h^\sfP, \tda_h^\sfA)$ are generated for the principal and agent, respectively.\footnote{To ease the notation, we sometimes write a joint action (or observation) as two separate elements instead of a tuple. We also use commas and semicolons interchangeably as separators in a tuple, where semicolons are mainly used for differentiating elements belonging to different time steps.}
The next state is drawn: $s_{h+1} \sim p_h(\cdot \given s_h, a_h^\sfP, \tda_h^\sfA)$. 
\end{enumerate}
The model generalizes several types of principal-agent interaction, including information design (where the principal is the observer and the agent acts), automated mechanism design (where the agent is the observer and the principal acts), and stochastic games with commitment and coordination (where the environment is fully observable).

Following the general paradigm of principal-agent problems, we consider the principal's {\em commitment} to a coordination policy. The agent best-responds to the principal's commitment.
Both players are {\em far-sighted} and aim to maximize their total reward obtained over the $H$ time steps.\footnote{While we do not assume reward discounting, all our results can be easily extended to capture discounted rewards.} 
We take the principal's perspective and the goal, as we will shortly formalize, is to compute the principal's optimal commitment.
At a high level, this is a Stackelberg game between the principal and the agent and we aim to compute a Stackelberg equilibrium.

\subsection{Hindsight Observability}

Unsurprisingly, the model we have described so far is in general intractable because it generalizes POMDPs. Solving POMDPs is known to be PSPACE-hard~\cite{papadimitriou1987complexity}. The hardness remains even in the above-mentioned special cases of the model. Given this complexity barrier, we focus on the setting with {\em hindsight observability}, following the literature on POMDPs \cite{lee2023learning}.\footnote{This simplifies a conditional independence assumption in a previous preprint version \cite{gan2023sequentialv2}.}
Under hindsight observability, the interaction history is revealed to the players at the end of each time step (or equivalently, it is encoded in the players' observations in the next time step). 

It is essential that both players observe the history in hindsight. Otherwise, the model remains a generalization of POMDPs and PSAPCE-hard, even when the principal observes everything throughout (see a discussion in \Cref{sec:hardness}).
Although hindsight observability may limit some generality, the model remains quite expressive and covers a range of important scenarios, including: scenarios where the state is immediately observable, e.g., repeated games, stochastic games with full state observability \cite{collina2023efficient}, as well as scenarios where observations can be interpreted as external parameters generated based on an internal Markovian state observable to both players (e.g., \cite{gan2022bayesian,wu2022sequential}).

\subsection{History-dependent Policy}
\label{sc:history-dependent-strategies}

We consider history-dependent policies, which are more general than stationary policies and hence typically yield higher payoffs. 
For example, to punish the agent for performing a certain action requires a history-dependent policy that remember the agent's action in the previous time step.
History-dependent policies are also a natural choice for finite-horizon models, like the one we consider, where the memory required to track the history is bounded by the horizon length.

A history up to time step $h$ is a sequence 
$\sigma = \left(s_\ell, \vomega_\ell, \wtomega_\ell^\sfA, \va_\ell, \tda_\ell^\sfA \right)_{\ell=1}^h$, containing elements in the four stages of each step described above (and we write $\vomega_\ell = (\omega_\ell^\sfP, \omega_\ell^\sfA)$ and $\va_\ell = (a_\ell^\sfP, a_\ell^\sfA)$).
We let $\Sigma_h$ denote the set of all sequences till time step $h$, and let $\Sigma = \bigcup_{h=0}^H \Sigma_h$, where $\Sigma_0 = \{\varnothing\}$ contain only the empty sequence. 
Moreover, we denote by $\bar{\Sigma} \coloneqq S \times \Omega \times \Omega^\sfA \times A \times A^\sfA$ the set of all possible interactions within one time step. 
We can now write the transition function as $p_h(\cdot \given \sigma) = p_h(\cdot \given s_h, \va_h)$ for any given sequence $\sigma \in \Sigma_h$ (specially, $p_0(\cdot \given \varnothing) = p_0(\cdot)$).

\paragraph{Principal's Policy}

A history-dependent policy takes the form $\pi: \Sigma \times \Omega \to \Delta(A)$, whereby upon seeing $\sigma$ in the previous steps, observing $\omega^\sfP$, and receiving the agent's report $\wtomega^\sfA$ in the current step, the principal draws a joint action $\va = (a^\sfP, a^\sfA) \sim \pi(\sigma; \omega^\sfP, \wtomega^\sfA)$, sends $a^\sfA$ to the agent as an action recommendation, and performs $a^\sfP$ herself. 
We denote by $\pi(\va \given \sigma; \omega^\sfP, \wtomega^\sfA)$ the probability of each $\va$ in $\pi(\sigma; \omega^\sfP, \wtomega^\sfA)$.

\paragraph{Agent's Response}

The principal's commitment results in a meta-POMDP for the agent. The agent reacts by playing optimally in this meta-POMDP. When the principal's policy is IC, this simply means responding truthfully. 
Formally, the agent's strategy can be described by a {\em deviation plan} $\rho: (\sigma, \omega^\sfA) \mapsto (\wtomega^\sfA, f: A^\sfA \to A^\sfA)$, such that given any history $\sigma$ and observation $\omega^\sfA$ in the current step, the agent reports $\wtomega^\sfA$ and then plays $\tda^\sfA = f(a^\sfA)$ if subsequently the principal recommends playing $a^\sfA$. For simplicity, we write $\wtomega^\sfA = \rho(\sigma; \omega^\sfA)$ and $\tda^\sfA = \rho(\sigma; \omega^\sfA, a^\sfA)$.
We denote by $\perp$ the special deviation plan where no deviation is made, i.e., $\perp(\sigma; \omega^\sfA) \equiv \omega^\sfA$ and $\perp(\sigma; \omega^\sfA, a^\sfA) \equiv a^\sfA$.

The agent's value (i.e., total reward) induced by a policy $\pi$ and a deviation strategy $\rho$ can be defined recursively via the value function as follows. For every $h=1,\dots,H-1$ and $\sigma \in \Sigma_{h-1}$: 
\begin{align}
\label{eq:V}
&V_h^{\sfA,\pi,\rho} \left(\sigma \right) \coloneqq
\E_{(s, \vomega) \sim p_{h-1}(\cdot \given \sigma)}\
\E_{\va \sim \pi( \cdot \given \sigma, \omega^\sfP, \wtomega^\sfA)}
\left( r_h^\sfA \left(s, a^\sfP,\tda^\sfA \right) + 
V_{h+1}^{\sfA, \pi, \rho} \left( \sigma;\, s, \vomega, \wtomega^\sfA, \va, \tda^\sfA \right) \right),
\end{align}
where $\wtomega^\sfA = \rho(\sigma^\sfA, \omega^\sfA)$ and $\tda^\sfA = \rho(\sigma^\sfA, \omega^\sfA, a^\sfA)$, and by assumption $V_H^\sfA(\sigma) \equiv 0$ for the last time step.
The principal's value is defined the same way by changing the labels.

Our goal is to find a policy $\pi$ that maximizes the principal's value under the agent's best response:
\begin{align}
\label{eq:main-prob}
\max_{\pi, \rho} & \quad V_1^{\sfP,\pi,\rho}(\varnothing) \\
\text{subject to} & \quad
\rho \in \argmax\nolimits_{\rho'} V_1^{\sfA,\pi,\rho'}(\varnothing)
\label{eq:main-prob-const-1} \tag{\ref{eq:main-prob}-1}
\end{align}
In other words, we look for $\pi$ and $\rho$ that form a Stackelberg equilibrium.
We say that policy $\pi$ is {\em $\epsilon$-optimal} if $V_1^{\sfP,\pi,\rho}(\varnothing) \ge V^* - \epsilon$ for some $\rho$ satisfying \Cref{eq:main-prob-const-1}, where $V^*$ denotes the optimal value of \Cref{eq:main-prob}.

As we will demonstrate, under hindsight observability, it is without loss of optimality to consider policies that are IC (incentive compatible), which incentivize $\perp$ as an optimal response of the agent.

\begin{definition}[IC policy]
\label{def:IC}
A policy $\pi$ is IC if $V_1^{\sfA,\pi,\perp}(\varnothing) \ge V_1^{\sfA,\pi,\rho} (\varnothing)$ for every possible deviation plan $\rho$ of the agent.
\end{definition}

\section{Computing an Optimal Policy}
\label{sc:computing}

We use a dynamic programming approach and compute a near-optimal policy by constructing the {\em inducible value sets}.
The approach is similar to solving an MDP by reasoning about the values of the states. The difference is that, since we are in a two-player setting and need to manage both players' incentives, we use a two-dimensional value, i.e., a value vector, to capture both players' values. We compute the set of all possible value vectors that can be induced by some policy of the principal.

\begin{definition}[Inducible value set]
The inducible value set $\calV_h(\sigma) \subseteq \mathbb{R}^2$ of a sequence $\sigma \in \Sigma_{h-1}$ consists of all vectors $\vv = (v^\sfP, v^\sfA)$ such that $v^\sfP = V_h^{\sfP,\pi,\rho}(\sigma)$ and $v^\sfA = V_h^{\sfA,\pi,\rho}(\sigma)$ 
for some policy $\pi$ and deviation plan $\rho \in \argmax_{\rho'} V_h^{\sfA,\pi,\rho'}(\sigma)$.
\end{definition}

By definition, it is straightforward that once we obtain $\calV_1(\varnothing)$, the principal's optimal value in \eqref{eq:main-prob} can be computed by solving $\max_{(v^\sfP, v^\sfA) \in \calV_1(\varnothing)} v^\sfP$.
A key observation is that the value sets are the same for sequences that end with the same state-action pair. Hence, it suffices to construct one set for each state-action pair, rather than dealing with each of the (exponentially many) possible sequences.
Intuitively, given the state-action pair in the previous time step, the current state and the subsequent process is independent of the earlier history. For ease of description, in what follows, we denote by $O = S \times A$ the set of all possible state-action pairs.

\begin{restatable}{lemma}{lmmsigmao}
\label{lmm:sigma-o}
For all $\sigma,\sigma' \in \Sigma_{h-1}$, it holds that
$\calV_h(\sigma) = \calV_h(\sigma')$ if $o_{h-1} = o'_{h-1}$, where $o_{h-1}, o'_{h-1} \in O$ are the state-action pairs in time step $h-1$, in $\sigma$ and $\sigma'$, respectively.
\end{restatable}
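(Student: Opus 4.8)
The plan is to show that the two inducible value sets consist of exactly the same $(\pi,\rho)$-induced value vectors, by exhibiting a value- and best-response-preserving relabeling of histories between the sub-trees rooted at $\sigma$ and $\sigma'$. The first step is to record a locality property of the value functions, obtained by unrolling the recursion \eqref{eq:V}: for any continuation $\tau$ (a tuple over $\bar{\Sigma}$) with $(\sigma;\tau)\in\Sigma$, writing $h'=h+|\tau|$, the value $V_{h'}^{\sfP,\pi,\rho}(\sigma;\tau)$ — and likewise the agent's value — depends on $\sigma$ only through the state-action pair $o_{h-1}$, and otherwise only on the values $\pi\big((\sigma;\tau');\cdot\big)$ and $\rho\big((\sigma;\tau');\cdot\big)$ over continuations $\tau'$ extending $\tau$. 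This is because each expectation in \eqref{eq:V} at $(\sigma;\tau)$ is taken over $p_{h'-1}(\cdot\given\sigma;\tau)=p_{h'-1}(\cdot\given s_{h'-1},\va_{h'-1})$, which depends only on the last state-action pair recorded in $\tau$ when $|\tau|\ge1$ and equals $p_{h-1}(\cdot\given o_{h-1})$ when $\tau=\varnothing$, while the rewards $r_{h'}^\sfP,r_{h'}^\sfA$ depend only on the current state-action pair; the claim then follows by backward induction on $h'$ from $H$ (where both values vanish) down to $h$.

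Next I would fix $\sigma,\sigma'\in\Sigma_{h-1}$ with $o_{h-1}=o'_{h-1}$ and define a transport map. Because sequences of a common length are tuples over $\bar{\Sigma}$, the set of continuations $\tau$ with $(\sigma;\tau)\in\Sigma$ coincides with the set with $(\sigma';\tau)\in\Sigma$. Given a policy $\pi$ and a deviation plan $\rho$, define $\pi',\rho'$ by $\pi'\big((\sigma';\tau);\cdot\big)\coloneqq\pi\big((\sigma;\tau);\cdot\big)$ and $\rho'\big((\sigma';\tau);\cdot\big)\coloneqq\rho\big((\sigma;\tau);\cdot\big)$ for all such $\tau$, and $\pi'\coloneqq\pi$, $\rho'\coloneqq\rho$ on all other histories (this part is immaterial to any value rooted at $\sigma'$). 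Applying the locality property and a second backward induction on $h'$ yields $V_{h'}^{\sfP,\pi',\rho'}(\sigma';\tau)=V_{h'}^{\sfP,\pi,\rho}(\sigma;\tau)$ and $V_{h'}^{\sfA,\pi',\rho'}(\sigma';\tau)=V_{h'}^{\sfA,\pi,\rho}(\sigma;\tau)$ for every $\tau$; with $\tau=\varnothing$ this says the root value vector of $(\pi,\rho)$ at $\sigma$ equals that of $(\pi',\rho')$ at $\sigma'$.

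It then remains to transfer the best-response condition. With $\pi$ fixed (hence $\pi'$ fixed), the map $\rho\mapsto\rho'$ is a bijection — modulo the equivalence ``agrees on all continuations of the root'' — between deviation plans as evaluated at $\sigma$ and as evaluated at $\sigma'$, and by the previous step it preserves the agent's root value, $V_h^{\sfA,\pi,\rho}(\sigma)=V_h^{\sfA,\pi',\rho'}(\sigma')$. Hence $\sup_\rho V_h^{\sfA,\pi,\rho}(\sigma)=\sup_\rho V_h^{\sfA,\pi',\rho}(\sigma')$, and $\rho\in\argmax_{\rho''}V_h^{\sfA,\pi,\rho''}(\sigma)$ iff $\rho'\in\argmax_{\rho''}V_h^{\sfA,\pi',\rho''}(\sigma')$. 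Therefore, if $\vv\in\calV_h(\sigma)$ is witnessed by a pair $(\pi,\rho)$ with $\rho$ a best response, then the transport $(\pi',\rho')$ witnesses $\vv\in\calV_h(\sigma')$; thus $\calV_h(\sigma)\subseteq\calV_h(\sigma')$, and exchanging the roles of $\sigma$ and $\sigma'$ gives the reverse inclusion, hence equality.

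The main obstacle I anticipate is organizational rather than conceptual: setting up the transport cleanly on the (finite-depth) tree of continuations, and, in the final step, justifying that the best-response quantifier — which formally ranges over globally defined deviation plans — may be replaced by one over the sub-tree rooted at $\sigma'$, so that the transport is surjective up to the stated equivalence. The independence facts about $p_{h'}$ and $r_{h'}$ and the two backward inductions themselves are routine.
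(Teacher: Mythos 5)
Your proposal is correct and takes essentially the same route as the paper's own proof: transport the policy (and deviation plan) from the sub-tree rooted at $\sigma$ to the one rooted at $\sigma'$, observe via \Cref{eq:V} that values depend on the past only through $o_{h-1}$, conclude one inclusion, and finish by symmetry. Your version is simply more explicit about the backward inductions and about why the best-response property is preserved under the relabeling, details the paper leaves implicit.
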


Given the above lemma, we will denote by $\calV_h(o)$ the value set of all sequences ending with $o$. Namely, for all $\sigma \in \Sigma_{h-1}$ in which $(s_{h-1}, \va_{h-1}) = o$, we have $\vv \in \calV_h(o)$ if and only if $\vv \in \calV_h(\sigma)$.
We construct the value sets via a dynamic programming approach next.

\subsection{Computing Inducible Value Sets}
\label{sc:backward-induction}

We will use a convex polytope to approximate each inducible value set. Let $\wh{\calV}_h(o)$ denote the approximation of $\calV_h(o)$ we aim to obtain. Recall that in the last time step all rewards are $0$, so trivially we use $\wh{\calV}_H(o) = \calV_H(s) = \{(0,0)\}$ for all $o \in O$ as the base case.

\paragraph{Dynamic Programming}
Now suppose that we have obtained the polytopes $\wh{\calV}_{h+1}(o')$ for all $o' \in O$. We move to time step $h$ and construct each $\wh{\calV}_h(o)$ based on the $\wh{\calV}_{h+1}(o')$'s.
Central to the approach is the following characterization, which describes an IC condition at time step $h$: 
for every $\vv \in \mathbb{R}^2$, it holds that $\vv \in \calV_h(o)$ if and only if there exist a one-step policy $\varpi: \Omega \to \Delta(A)$ and a set of onward value vectors 
$\left\{\vv'(\bar{\sigma}) \in \mathbb{R}^2: \bar{\sigma} \in \bar{\Sigma} \right\}$ 
that satisfy the following constraints.
\begin{enumerate}[leftmargin=5mm,itemsep=1mm]
\item A value function constraint based on \Cref{eq:V}, which expresses 
$\vv$ via the immediate rewards and onward value vectors $\vv'$ to be induced next, assuming truthful response of the agent:
\begin{align}
\vv 
=&
\sum_{s, \vomega, \va} 
p_{h-1}(s, \vomega \given o) \cdot \hltb{\varpi({\va} \given \vomega)} \cdot 
\Big( \rr_h(s, \va)  + 
\hltb{\vv'(s, \vomega, \omega^\sfA, \va, a^\sfA)}  \Big),  \label{eq:tV-omega-t-cons-v}
\end{align}
The onward value vectors represent the subsequent part of the principal's commitment, which is contingent on the interaction $(s, \vomega, \wtomega^\sfA, \va, \tda^\sfA)$ in time step $h$. They can be viewed as part of the principal's strategy, as if the principal directly selects the future values.
Under the truthful response of the agent, we have $\wtomega^\sfA = \omega^\sfA$ and $\tda^\sfA = a^\sfA$ in \Cref{eq:tV-omega-t-cons-v}.

\item IC constraints, which ensure that the agent's truthful behavior assumed in \Cref{eq:tV-omega-t-cons-v} is indeed incentivized, where we denote by $p_{h-1}(s, \omega^\sfP \given o, \omega^\sfA) \propto p_{h-1}(s, \vomega \given o)$ the conditional probability defined by $p_{h-1}$:
\begin{align}
% \label{eq:tV-omega-t-cons-ic}
& 
\sum_{s, \omega^\sfP, \va}
p_{h-1}(s, \omega^\sfP \given o, \omega^\sfA) \cdot \hltb{\varpi ( {\va} \given \vomega )} \cdot \left( r_h^\sfA(s, \va)  + 
% \sum_{o'} \phi_h(o' \given s, \va) \cdot 
\hltb{{v'}^\sfA (s, \vomega, \omega^\sfA, \va, a^\sfA )}  \right) \ge \nonumber \\
& 
\sum_{a^\sfA}\ 
\max_{\tda^\sfA \in A^\sfA}\
\sum_{s, \omega^\sfP, a^\sfP}\
p_{h-1}(s, \omega^\sfP \given o, \omega^\sfA) \cdot \hltb{\varpi ( \va  \given \omega^\sfP, \wtomega^\sfA )} \cdot \left( r_h^\sfA (s, a^\sfP, \tda^\sfA )  + 
% \phantom{{v'}^\sfA}\right. 
% \nonumber\\
% &\quad\qquad\qquad\qquad
% \left.
% \sum_{o'} \phi_h \left(o' \given s, a^\sfP, \tda^\sfA \right) \cdot 
\hltb{{v'}^\sfA (s, \vomega, \wtomega^\sfA, \va, \tda^\sfA )}  \right) 
\nonumber\\
& \qquad\qquad\qquad\qquad\qquad\qquad\qquad\qquad\qquad\qquad\qquad\qquad\qquad\qquad\qquad 
\text{for all } \omega^\sfA \in \Omega^\sfA
\label{eq:tV-omega-t-cons-ic}
\end{align}
% \vspace{-4mm}
% \end{framed}
% \end{figure*}
Namely, the constraint says, upon observing $\omega^\sfA$, the agent's expected payoff under their truthful response is at least as much as what they could have obtained, had they: 1) reported a different observation $\wtomega^\sfA$, 2) performed a best action $\tda^\sfA$ in response to every possible recommendation $a^\sfA$ of the principal, and 3) responded optimally in the subsequent time steps (whereby the onward values are given by $\vv'$).

\item Onward value constraints, which ensures that the onward values given by $\vv'$ are also inducible:
\begin{align}
\label{eq:tV-omega-t-cons-v-in-V-exact}
\hltb{\vv'(s, \vomega, \wtomega^\sfA, \va, \tda^\sfA)} \in {\calV}_{h+1} (s, a^\sfP, \tda^\sfA )
&& \text{for all } 
(s, \vomega , \wtomega^\sfA, \va , \tda^\sfA) \in \bar{\Sigma}.
\end{align}
\end{enumerate}

The following lemma indicates the correctness of the above characterization.

\begin{restatable}{lemma}{lmminducibleset}
\label{lmm:inducible-set}
$\vv \in \calV_h(o)$ if and only if there exist $\varpi: \Omega \to \Delta(A)$ and $\vv': \bar{\Sigma} \to \mathbb{R}^2$
such that \Cref{eq:tV-omega-t-cons-v,eq:tV-omega-t-cons-ic,eq:tV-omega-t-cons-v-in-V-exact} hold.
\end{restatable}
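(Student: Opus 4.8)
The plan is to prove both directions of the biconditional by unrolling the recursive definition of the value functions at time step $h$ and identifying the "tail" of the principal's commitment and the agent's response with the objects $\varpi$ and $\vv'$. Throughout, I would fix an arbitrary $\sigma \in \Sigma_{h-1}$ with $(s_{h-1}, \va_{h-1}) = o$; by \Cref{lmm:sigma-o} the set $\calV_h(\sigma)$ does not depend on which such $\sigma$ we pick, so it equals $\calV_h(o)$.

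For the ``only if'' direction, suppose $\vv \in \calV_h(o)$, witnessed by a policy $\pi$ and a deviation plan $\rho \in \argmax_{\rho'} V_h^{\sfA,\pi,\rho'}(\sigma)$. Since the principal's policy can be taken IC without loss of optimality (as remarked before \Cref{def:IC}), I would first reduce to the case where $\rho = \perp$, so that $\wtomega^\sfA = \omega^\sfA$ and $\tda^\sfA = a^\sfA$ along the realized path. Define the one-step policy by $\varpi(\cdot \given \vomega) \coloneqq \pi(\cdot \given \sigma; \omega^\sfP, \omega^\sfA)$, and for each within-step interaction $\bar\sigma = (s, \vomega, \wtomega^\sfA, \va, \tda^\sfA)$ define the onward vector $\vv'(\bar\sigma)$ to be the pair of continuation values $\big(V_{h+1}^{\sfP,\pi,\rho}(\sigma; \bar\sigma), V_{h+1}^{\sfA,\pi,\rho}(\sigma; \bar\sigma)\big)$ obtained from $\pi$ and the agent's best continuation response. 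Equation \eqref{eq:tV-omega-t-cons-v} is then exactly the value-function recursion \Cref{eq:V} applied at $\sigma$; \eqref{eq:tV-omega-t-cons-v-in-V-exact} holds because each continuation value vector is by construction in the inducible set $\calV_{h+1}(s, a^\sfP, \tda^\sfA)$ (here I use \Cref{lmm:sigma-o} again to label continuation sets by their terminal state-action pair); and \eqref{eq:tV-omega-t-cons-ic} is precisely the statement that reporting truthfully and playing the recommendation, then continuing optimally, is at least as good for the agent as any one-step deviation $(\wtomega^\sfA, \tda^\sfA)$ followed by an optimal continuation — which follows from the optimality of $\rho = \perp$ together with the fact that the agent may reoptimize his continuation in each branch (this is where the inner $\max_{\tda^\sfA}$ and the summation structure over $a^\sfA$ come from, since the agent chooses a best reply to each recommendation separately).

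For the ``if'' direction, given $\varpi$ and $\vv'$ satisfying the three constraints, I would construct $\pi$ and $\rho$ achieving $\vv$. For each terminal state-action pair $(s, a^\sfP, \tda^\sfA)$ and each onward vector $\vv'(\bar\sigma)$ lying in $\calV_{h+1}(s, a^\sfP, \tda^\sfA)$, invoke the definition of the inducible value set to pick a continuation policy $\pi^{\bar\sigma}$ and deviation plan $\rho^{\bar\sigma} \in \argmax V_{h+1}^{\sfA}$ realizing that vector; then stitch these together with $\varpi$ at step $h$ to form $\pi$, and correspondingly stitch the $\rho^{\bar\sigma}$ together with truthful behavior at step $h$ to form $\rho$. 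The value-function recursion gives $V_h^{\sfP,\pi,\rho}(\sigma) = v^\sfP$ and $V_h^{\sfA,\pi,\rho}(\sigma) = v^\sfA$ directly from \eqref{eq:tV-omega-t-cons-v}. It remains to check that $\rho$ (which is truthful at step $h$) is a best response, i.e.\ $\rho \in \argmax_{\rho'} V_h^{\sfA,\pi,\rho'}(\sigma)$: any alternative $\rho'$ deviates at step $h$ by some $(\wtomega^\sfA, \tda^\sfA)$ and then behaves somehow afterwards, and its value is upper-bounded by replacing the step-$h$ deviation's continuation with an optimal one — the right-hand side of \eqref{eq:tV-omega-t-cons-ic} — which is $\le$ the left-hand side, the value of the truthful response. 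A subtlety is that after an off-path report $\wtomega^\sfA \ne \omega^\sfA$ the continuation policy branch is still $\pi^{\bar\sigma}$ with $\bar\sigma$ recording $\wtomega^\sfA$, so the agent's best continuation there is captured by the fact that $\vv'(s,\vomega,\wtomega^\sfA,\va,\tda^\sfA)$ is an inducible vector — i.e.\ its agent-coordinate is the \emph{best-response} value, not merely some value — which is exactly what the onward-value constraint \eqref{eq:tV-omega-t-cons-v-in-V-exact} guarantees.

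The main obstacle I expect is bookkeeping the IC argument cleanly: making precise that a one-shot deviation at step $h$ composed with an arbitrary continuation is dominated by the same one-shot deviation composed with an \emph{optimal} continuation, and that the latter is exactly the right-hand side of \eqref{eq:tV-omega-t-cons-ic}. This requires a ``one-step-deviation''-style decomposition of $V_h^{\sfA,\pi,\rho'}$ and careful handling of the independent per-recommendation reoptimization encoded by $\sum_{a^\sfA}\max_{\tda^\sfA}$, as well as keeping track that the conditional $p_{h-1}(s,\omega^\sfP \given o, \omega^\sfA)$ is the correct belief the agent holds when he has observed $\omega^\sfA$. Everything else is a routine expansion of \Cref{eq:V} and repeated appeals to \Cref{lmm:sigma-o} and the definition of $\calV_{h+1}$.
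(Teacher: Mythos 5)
Your proposal is correct and follows essentially the same route as the paper's proof: the ``only if'' direction invokes the revelation principle to reduce to a truthful step-$h$ response and then reads off $\varpi$ and $\vv'$ from the value recursion \Cref{eq:V}, while the ``if'' direction stitches $\varpi$ at step $h$ onto continuation policies realizing the onward vectors and uses \Cref{eq:tV-omega-t-cons-ic} for step-$h$ IC. Your explicit remark that the agent-coordinate of each $\vv'(\bar\sigma)$ is a \emph{best-response} continuation value (because the vector lies in the inducible set) is exactly the point the paper relies on, just stated more carefully.
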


Therefore, to decide whether $\vv \in {\calV}_h(o)$ amounts to deciding whether the above constraints are satisfied by some $\varpi$ and $\vv'$ (highlighted in blue in the constraints). 
Note that since the inductive hypothesis assumes an approximation $\wh{\calV}_{h+1}(o')$ instead of the exact set $\calV_{h+1}(o')$, we will in fact impose the following {\em approximate} onward value constraint, instead of the exact version in \Cref{eq:tV-omega-t-cons-v-in-V-exact}:
\begin{align}
& \hltb{\vv'(s, \vomega, \wtomega^\sfA, \va, \tda^\sfA)} \in \wh{\calV}_{h+1} (s, a^\sfP, \tda^\sfA )  
&& \text{for all } (s, \vomega , \wtomega^\sfA, \va , \tda^\sfA) \in \bar{\Sigma}.
\label{eq:tV-omega-t-cons-v-in-V}
\end{align}

\paragraph{Linearizing \Cref{eq:tV-omega-t-cons-v,eq:tV-omega-t-cons-ic}}
The constraint satisfiability problem defined above is non-linear due to the quadratic terms and the maximization operator in \Cref{eq:tV-omega-t-cons-v,eq:tV-omega-t-cons-ic}.
Nevertheless, it can be linearized as long as every polytope $\wh{\calV}_{h+1}(o')$, $o' \in O$, is given by the {\em half-space representation}, i.e., by linear constraints in the form $\mathbf{H} \cdot \xx \le \bb$ for some matrix $\mathbf{H}$ and vector $\bb$. 
Specifically, to remove the maximization operator in \Cref{eq:tV-omega-t-cons-ic}, we introduce a set of auxiliary variables 
$y(a^\sfA, \omega^\sfA, \wtomega^\sfA)$ 
to capture the maximum values on the right hand side of \Cref{eq:tV-omega-t-cons-ic}.
We replace the right hand side of \Cref{eq:tV-omega-t-cons-ic} with
$\sum_{ a^\sfA \in A^\sfA } y(a^\sfA, \omega^\sfA, \wtomega^\sfA)$,
and by adding the following constraint we force each $y(a^\sfA, \omega^\sfA, \wtomega^\sfA)$ to be an upper bound of the corresponding maximum value:
\begin{align}
\label{eq:y-ge} 
\hltb{y(a^\sfA, \omega^\sfA, \wtomega^\sfA)} 
&\ge 
\sum_{s, \omega^\sfP, a^\sfP}\
p_{h-1}(s, \omega^\sfP \given o, \omega^\sfA) \cdot \hltb{\varpi ( \va  \given \omega^\sfP, \wtomega^\sfA )} \cdot 
% \nonumber\\
% &\text{\small $
\left( r_h^\sfA (s, a^\sfP, \tda^\sfA ) + 
% \phantom{\sum_{o' }} \right.
% \nonumber\\
% &\qquad\qquad\qquad
% \left.
% \sum_{o'} \phi_h \left(o' \given s, a^\sfP, \tda^\sfA \right) \cdot 
\hltb{{v'}^\sfA(s, \vomega, \wtomega^\sfA, \va, \tda^\sfA)}  \right)
% $ }  
\nonumber \\
&\qquad\qquad\qquad\qquad\qquad\qquad\qquad\qquad\qquad\qquad\qquad\qquad\qquad 
\text{for all } \tda^\sfA \in A^\sfA
\end{align}
To remove the quadratic terms in \Cref{eq:tV-omega-t-cons-v,eq:y-ge}, we use an auxiliary variable $\zz(s, \vomega, \wtomega^\sfA, \va, \tda^\sfA)$ to replace each term 
$\varpi ( \va  \given \vomega ) \cdot \vv'(s, \vomega, \wtomega^\sfA, \va, \tda^\sfA)$ and impose the following constraint on $\zz$:
\begin{equation}
\label{eq:tV-omega-linear-cons-z-Ab}
\mathbf{H} \cdot  \hltb{\zz(s, \vomega, \wtomega^\sfA, \va, \tda^\sfA)} 
\le \hltb{\varpi ( \va  \given \vomega )} \cdot \bb,
\end{equation}
where $\mathbf{H}$ and $\bb$ are taken from the half-space representation of the polytope $\wh{\calV}_{h+1}$, i.e., $\wh{\calV}_{h+1}(s, a^\sfP, \tda^\sfA) = \{\xx : \mathbf{H} \cdot \xx \le \bb \}$.
It is straightforward that, when $\wh{\calV}_{h+1}(s, a^\sfP, \tda^\sfA)$ is nonempty and bounded, \Cref{eq:tV-omega-linear-cons-z-Ab} holds if and only if $\zz(s, \vomega, \wtomega^\sfA, \va, \tda^\sfA) = \varpi ( \va  \given \vomega ) \cdot \xx$ for some $\xx \in \wh{\calV}_{h+1}(s, a^\sfP, \tda^\sfA)$.\footnote{Note that if $\varpi ( \va  \given \vomega ) = 0$, then \Cref{eq:tV-omega-linear-cons-z-Ab} imply that $\zz(s, \vomega, \wtomega^\sfA, \va, \tda^\sfA) = \mathbf{0}$: otherwise, the fact that $\xx' = c \cdot \zz(s, \vomega, \wtomega^\sfA, \va, \tda^\sfA) + \xx$ satisfies $\mathbf{H} \cdot \xx' \le \bb$ for any $c \ge 0$ and $\xx \in \wh{\calV}_{h+1}(s, a^\sfP, \tda^\sfA)$ would prevent $\wh{\calV}_{h+1}(s, a^\sfP, \tda^\sfA)$ from being bounded.}
Hence, \Cref{eq:tV-omega-linear-cons-z-Ab} is the only constraint needed (for each tuple $(s, \vomega, \wtomega^\sfA, \va, \tda^\sfA)$) after we replace the terms with $\zz$. 
This completes the linearization of \Cref{eq:tV-omega-t-cons-v,eq:tV-omega-t-cons-ic}. The complete formulation of the linear constraint satisfiability problem can be found in \Cref{sec:lp}.

\begin{figure}
\centering
\begin{framed}
\raggedright
For $h = H-1,\, \dots,\, 1$,
do the following for all $o \in O$:

\begin{enumerate}
[itemsep=3pt,leftmargin=5mm]
\item 
Plug in \Cref{eq:tV-omega-t-cons-v-in-V} the half-space representation of $\wh{\calV}_{h+1}(o')$, $o' \in O$. Then linearize \Cref{eq:tV-omega-t-cons-v,eq:tV-omega-t-cons-ic}.

\item 
Discretize the space $[0,H]^2$ into a finite point set (see \Cref{lmm:compute-calV} for more detail). 
Check the inducibility of each point $\vv$ in this set by solving the linear constraint satisfiability problem defined by (the linearized version of) \Cref{eq:tV-omega-t-cons-v,eq:tV-omega-t-cons-ic,eq:tV-omega-t-cons-v-in-V}.

\item 
Compute $\wh{\calV}_h(o)$ as the convex hull of the inducible points obtained above, in half-space representation.
\end{enumerate}
\vspace{-2mm}
\end{framed}
\vspace{-2mm}
\caption{Computing approximate value polytopes via dynamic programming.}
\label{fig:dp}
\end{figure}

\paragraph{Constructing $\wh{\calV}_h(o)$}
As a result, we obtain a polytope $\calP$ defined by a set of linear constraints equivalent to \Cref{eq:tV-omega-t-cons-v,eq:tV-omega-t-cons-ic,eq:tV-omega-t-cons-v-in-V}.
The projection of $\calP$ onto the dimensions of $\vv$ is (approximately) $\calV_h(o)$.
To ensure that the projection can be plugged back into \Cref{eq:tV-omega-t-cons-v-in-V} in the next induction step, we need the half-space representation of the projection, too. 
In particular, we want to eliminate the additional variables in the representation so that only $\vv$ remains. (Otherwise, the number of variables may grow exponentially as the induction step increases.)
This can be done approximately in polynomial time given that $\vv$ is two-dimensional. Roughly speaking, we discretize the box $[0, H]^2$ into a finite set of points (recall that rewards in each time step are bounded in $[0,1]$, so $[0, H]^2$ contains $\calV_h(o)$), check the inducibility of each point, and compute the convex hull of the inducible points in half-space representation.
The specific way we discretize the space (see \Cref{fig:slice}) ensures that IC is satisfied {\em exactly} (which can otherwise not be achieved by using standard grid-based discretization). The details can be found in the proof of  \Cref{lmm:compute-calV}. 

\medskip

Repeating the induction procedure till $h=1$, we obtain $\wh{\calV}_1(\varnothing)$ as well as a near-optimal value of the principal by solving the LP $\max_{\vv \in \wh{\calV}_1(\varnothing)} v^\sfP$.
This dynamic programming approach is summarized in \Cref{fig:dp}.

\begin{restatable}{lemma}{thmcomputecalV}
\label{lmm:compute-calV}
For any constant $\epsilon > 0$, it can be computed in time $\poly(|S| {\cdot} |A| {\cdot} |\Omega|, H, 1/\epsilon)$ the half-space representations of a set of polytopes $\wh{\calV}_h(o) \subseteq \calV_h(o)$, $o \in O \cup\{\varnothing\}$ and $h=1,\dots,H$, such that \Cref{eq:tV-omega-t-cons-v,eq:tV-omega-t-cons-ic,eq:tV-omega-t-cons-v-in-V} are satisfiable for every $\vv \in \wh{\calV}_h(o)$ and
$\max_{\vv \in \wh{\calV}_1(\varnothing)} v^\sfP \ge \max_{\vv \in \calV_1(\varnothing)} v^\sfP - \epsilon$.
\end{restatable}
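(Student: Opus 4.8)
The plan is a backward induction on $h$ from $H$ down to $1$ that, for every $o\in O\cup\{\varnothing\}$, constructs $\wh\calV_h(o)$ and maintains three invariants: (a) $\wh\calV_h(o)\subseteq\calV_h(o)$ and every $\vv\in\wh\calV_h(o)$ satisfies \Cref{eq:tV-omega-t-cons-v,eq:tV-omega-t-cons-ic,eq:tV-omega-t-cons-v-in-V}; (b) $\wh\calV_h(o)$ and $\calV_h(o)$ have the same projection $[\alpha_h(o),\beta_h(o)]$ onto the agent-value axis; and (c) writing $\psi_h(o,u):=\max\{v^\sfP:(v^\sfP,u)\in\calV_h(o)\}$ and $\wh\psi_h(o,u)$ for its analogue on $\wh\calV_h(o)$, one has $\wh\psi_h(o,u)\ge\psi_h(o,u)-\lambda_h$ for all $u$ in the range, where $\lambda_H=0$ and $\lambda_h=\lambda_{h+1}+c\,H\eta$ for an absolute constant $c$ and the discretization granularity $\eta$. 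The base case is immediate, $\wh\calV_H(o)=\calV_H(o)=\{(0,0)\}$. Once the invariants hold at $h=1$, invariant (c) gives $\max_{\calV_1(\varnothing)}v^\sfP=\max_u\psi_1(\varnothing,u)\le\max_u\wh\psi_1(\varnothing,u)+\lambda_1=\max_{\wh\calV_1(\varnothing)}v^\sfP+\lambda_1$, while (a) gives the reverse inequality; choosing $\eta=\Theta(\epsilon/H^2)$ makes $\lambda_1=O(H^2\eta)\le\epsilon$, which is the stated approximation bound, and (a) is the stated containment.

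For the inductive step I would first check that the linearized system of \Cref{fig:dp} is faithful. Let $\wt\calV_h(o)$ be the set of $\vv$ for which the approximate constraints \Cref{eq:tV-omega-t-cons-v,eq:tV-omega-t-cons-ic,eq:tV-omega-t-cons-v-in-V} are satisfiable. Using that each $\wh\calV_{h+1}(o')$ is a nonempty bounded polytope given in half-space form (nonemptiness and boundedness following from invariant (a) and $\wh\calV_{h+1}(o')\subseteq[0,H]^2$), the substitution of $\zz(\bar{\sigma})$ for the bilinear terms $\varpi(\va\given\vomega)\,\vv'(\bar{\sigma})$ is an equivalence by the footnote argument around \Cref{eq:tV-omega-linear-cons-z-Ab} (the $\varpi(\va\given\vomega)=0$ case is covered there), and the substitution of the $y(\cdot)$ variables for the inner maxima of \Cref{eq:tV-omega-t-cons-ic} is an equivalence because those variables occur only on the larger side of that inequality and so may be taken equal to the maxima at any feasible point. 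Hence the LP feasible region, projected onto the $\vv$-coordinates, is exactly $\wt\calV_h(o)$, which — being the projection of a polytope whose $\vv$-block lies in $[0,H]^2$ — is a bounded convex polygon; and since $\wh\calV_{h+1}(o')\subseteq\calV_{h+1}(o')$ by invariant (a), any $\vv'$ meeting \Cref{eq:tV-omega-t-cons-v-in-V} also meets \Cref{eq:tV-omega-t-cons-v-in-V-exact}, so \Cref{lmm:inducible-set} gives $\wt\calV_h(o)\subseteq\calV_h(o)$. Now $\wh\calV_h(o)$, the convex hull of the inducible points (i.e.\ members of $\wt\calV_h(o)$) returned by the \Cref{fig:slice} discretization, satisfies $\wh\calV_h(o)\subseteq\wt\calV_h(o)\subseteq\calV_h(o)$ — this is invariant (a) and the satisfiability requirement — and moreover IC holds for it \emph{exactly}, since its generators are exact members of $\wt\calV_h(o)$ and $\wt\calV_h(o)$ is convex. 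This is precisely why a plain grid will not do: it may miss a lower-dimensional $\wt\calV_h(o)$ entirely, and rounding to grid cells yields only approximate IC. Instead the construction LP-samples the exact extreme points of $\wt\calV_h(o)$ — in particular its axis-extreme vertices, so that $\wh\calV_h(o)$ retains the agent-value range of $\wt\calV_h(o)$ — and I expect the key geometric estimate to be that this inscribed polygon is a good inner approximation, $\wh\psi_h(o,u)\ge\wt\psi_h(o,u)-O(H\eta)$, where $\wt\psi_h$ is the analogue of $\psi_h$ for $\wt\calV_h(o)$; establishing this over the bounded domain $[0,H]^2$ (both where the boundary is steep, caught by the $v^\sfP$-slices, and where it is shallow, caught by the $v^\sfA$-slices) is the main technical point, and it is what forces $\eta$ to scale with $H$.

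Next I would relate $\wt\calV_h(o)$ back to $\calV_h(o)$ to close the error recursion. Fix $u\in[\alpha_h(o),\beta_h(o)]$ and take $(\psi_h(o,u),u)\in\calV_h(o)$, induced (by \Cref{lmm:inducible-set}) by some $\varpi$ and onward vectors $\vv'(\bar{\sigma})\in\calV_{h+1}(o_{\bar{\sigma}})$ with $o_{\bar{\sigma}}=(s,a^\sfP,\tda^\sfA)$. Replace each $\vv'(\bar{\sigma})$ by $\wh\vv'(\bar{\sigma}):=\bigl(\wh\psi_{h+1}(o_{\bar{\sigma}},{v'}^\sfA(\bar{\sigma})),\,{v'}^\sfA(\bar{\sigma})\bigr)$, the point of $\wh\calV_{h+1}(o_{\bar{\sigma}})$ with the \emph{same} agent value and maximal principal value; this is well defined by invariant (b) at $h+1$, so \Cref{eq:tV-omega-t-cons-v-in-V} holds for $\wh\vv'$, and because $\wh\vv'$ has the same agent-value components as $\vv'$, the inequalities \Cref{eq:tV-omega-t-cons-ic} — which see the onward vectors only through those components — remain satisfied. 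This is exactly where leaving agent values unperturbed is needed to keep IC exact. Plugging $(\varpi,\wh\vv')$ into \Cref{eq:tV-omega-t-cons-v} produces a point of $\wt\calV_h(o)$ at agent value $u$ with principal value at least $\psi_h(o,u)-\lambda_{h+1}$, using invariant (c) at $h+1$, the inequality ${v'}^\sfP(\bar{\sigma})\le\psi_{h+1}(o_{\bar{\sigma}},{v'}^\sfA(\bar{\sigma}))$, and that the weights $p_{h-1}(s,\vomega\given o)\,\varpi(\va\given\vomega)$ sum to one; hence $\wt\psi_h(o,u)\ge\psi_h(o,u)-\lambda_{h+1}$, and taking $u$ at the two endpoints shows $\wt\calV_h(o)$ (and so $\wh\calV_h(o)$) has agent-value range exactly $[\alpha_h(o),\beta_h(o)]$, giving invariant (b). Combining with the discretization bound gives $\wh\psi_h(o,u)\ge\psi_h(o,u)-\lambda_{h+1}-O(H\eta)=\psi_h(o,u)-\lambda_h$, giving invariant (c).

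For the running time, the decisive point is that each $\wh\calV_h(o)$ is a planar polygon with only $O(1/\eta)=\poly(1/\epsilon)$ vertices, hence its half-space form has $\poly(1/\epsilon)$ size and does not blow up across levels — this is the whole reason one discretizes rather than projecting $\calP$ exactly, whose complexity would grow with $h$. Each feasibility/extreme-point LP then has size $\poly(|S|\cdot|A|\cdot|\Omega|,1/\epsilon)$, there are $\poly(1/\epsilon)$ of them per pair $(h,o)$, and $|S|\cdot|A|+1$ choices of $o$ over $H$ levels, for a total of $\poly(|S|\cdot|A|\cdot|\Omega|,H,1/\epsilon)$; the convex-hull step is cheap in the plane. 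To summarize the difficulty: the only delicate ingredient is the discretization step, which must simultaneously (i) output a set exactly inside $\calV_h(o)$ so that IC is exact, (ii) keep bounded half-space complexity so the recursion stays polynomial, and (iii) lose only $O(H\eta)$ per level in the $\psi_h$ sense so the cumulative loss is $O(H^2\eta)\le\epsilon$; item (iii), an inner-polygon approximation estimate over $[0,H]^2$, is the crux, while items (i)–(ii) follow from sampling exact extreme points and taking convex hulls.
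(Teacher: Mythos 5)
Your proposal follows essentially the same route as the paper's proof: backward induction on $h$ maintaining that $\wh{\calV}_h(o)$ is contained in $\calV_h(o)$, preserves the agent-value projection exactly (so that IC stays exact and onward vectors can be swapped for their approximations without disturbing \Cref{eq:tV-omega-t-cons-ic}), and loses only an additively accumulating amount on the principal's value, with the polytope built by LP-sampling exact extreme points of the feasible region along slices of the $v^\sfP$-axis together with the two agent-value-extreme vertices, then taking a planar convex hull of polynomially many points. The one step you defer as ``the main technical point'' --- that the inscribed polygon satisfies $\wh{\psi}_h(o,u)\ge\wt{\psi}_h(o,u)-O(\eta)$ for every $u$ in the agent-value range --- is exactly the paper's \Cref{clm:delta-approx}, proved there by a short case analysis (any $\vv\in\overline{\calV}_h(o)$ lies in a band of width $\delta$ between consecutive slice lines, and its agent value must lie between the sampled minimum and maximum in that band, since otherwise the segment from $\vv$ to $\check{\vv}_*$ or $\hat{\vv}_*$ would cross a slice line at a point contradicting the extremality of $\check{\vv}_w$ or $\hat{\vv}_w$); your per-level loss $O(H\eta)$ is looser than the paper's exact $\delta$ but still yields the stated polynomial bound.
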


\subsection{Forward Computation of Optimal Policy}
\label{sc:forward-construction}

The above procedure yields the maximum inducible value of the principal but not yet an optimal policy that achieves this value.
We next demonstrate how to compute an optimal policy based on $\wh{\calV}_1(\varnothing)$.
Rather than obtaining an explicit description of a history-dependent policy $\pi$---which would be exponentially large as the policy specifies a distribution for each possible sequence---we present an efficient procedure that computes the distribution $\pi(\cdot \given \sigma; \omega^\sfP, \wtomega^\sfA)$ for any given sequence $(\sigma; \omega^\sfP, \wtomega^\sfA)$.
This means that, when playing the game, the principal can compute an optimal policy on-the-fly based on the realized history.

\begin{figure}
\centering
\begin{framed}
\raggedright
Input: a sequence $(\sigma; \omega^\sfP, \wtomega^\sfA)$, where $\sigma = (s_\ell, \vomega_\ell, \wtomega_\ell^\sfA, \va_\ell, \tda_\ell^\sfA)_{\ell=1}^{h-1}$.

\begin{enumerate}
[itemsep=3pt,leftmargin=5mm]
\item Initialize:
$\vv \leftarrow \argmax_{\vv \in \wh{\calV}_1(\varnothing)} v^\sfP$ and $o \leftarrow \varnothing$.

\item For $\ell = 1, \dots, h - 1$:

\begin{itemize}[itemsep=3pt,leftmargin=5mm]

\item Fix $\vv$ and $o$, and solve \Cref{eq:tV-omega-t-cons-v,eq:tV-omega-t-cons-ic,eq:tV-omega-t-cons-v-in-V}, where we use the polytopes $\wh{\calV}_h(o)$ described in \Cref{lmm:compute-calV}.
Let the solution be $\varpi$ and $\vv'$.

\item Update: 
$\vv \leftarrow \vv'(s_\ell, \vomega_\ell, \wtomega_\ell^\sfA, \va_\ell, \tda_\ell^\sfA)$ and $o \leftarrow (s_\ell, a_\ell^\sfP, a_\ell^\sfA)$. 
\end{itemize}

\item
Output $\pi(\cdot \given \sigma; \omega^\sfP, \wtomega^\sfA) = \varpi(\cdot \given \omega^\sfP, \wtomega^\sfA)$.
\end{enumerate}
\vspace{-2mm}
\end{framed}
\vspace{-2mm}
\caption{Computing a near-optimal policy based on approximations of the value polytopes.}
\label{fig:forward}
\end{figure}

We use a forward computation procedure presented in \Cref{fig:forward}. Starting from time step $1$, the procedure repeatedly computes a one-step policy $\varpi$ and a set of onward vectors, to induce the target value vector $\vv$.
The onward vectors define the target values to be induced in the next time step, contingent on the interaction in the current, which is given by $\sigma$.
Hence, the target vector is updated to one of the onward vectors according $\sigma$ at the end of each iteration.
In other words, in each time step, we expand the target vector into a set of onward vectors, and then select one of them as the next target vector according to the realized interaction given by $\sigma$. 

\medskip

This leads to the following main result of this section.

\begin{restatable}{theorem}{thmopt}
\label{thm:opt}
There exists an $\epsilon$-optimal IC policy $\pi$ such that, for any given sequence $(\sigma; \omega^\sfP, \wtomega^\sfA) \in \Sigma \times \Omega$, the distribution $\pi(\cdot \given \sigma; \omega^\sfP, \wtomega^\sfA)$ can be computed in time $\poly(|S| {\cdot} |A| {\cdot} |\Omega|, H, 1/\epsilon)$.
\end{restatable}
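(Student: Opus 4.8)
The plan is to assemble \Cref{thm:opt} from \Cref{lmm:sigma-o,lmm:inducible-set,lmm:compute-calV}. Fix the polytopes $\wh{\calV}_h(o)$ from \Cref{lmm:compute-calV} and fix, once and for all, a deterministic rule that returns, for every feasible instance of the (linearized) system \Cref{eq:tV-omega-t-cons-v,eq:tV-omega-t-cons-ic,eq:tV-omega-t-cons-v-in-V}, one solution $(\varpi,\vv')$. The forward procedure of \Cref{fig:forward} then maps each query $(\sigma;\omega^\sfP,\wtomega^\sfA)$ to a distribution $\pi(\cdot\given\sigma;\omega^\sfP,\wtomega^\sfA)$, and since iterations $1,\dots,\ell-1$ of the procedure depend only on the length-$(\ell-1)$ prefix of $\sigma$, these distributions are mutually consistent and define a genuine history-dependent policy $\pi$. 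It then remains to show: (i) every instance the procedure solves is feasible; (ii) each query is answered in time $\poly(|S|{\cdot}|A|{\cdot}|\Omega|,H,1/\epsilon)$; and (iii) $\pi$ is IC and $\epsilon$-optimal.

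For (i) I would carry the loop invariant that at the start of iteration $\ell$ the current pair $(\vv,o)$ satisfies $\vv\in\wh{\calV}_\ell(o)$. This holds initially since $\vv$ is drawn from $\wh{\calV}_1(\varnothing)$ (nonempty, as $\calV_1(\varnothing)\neq\varnothing$ and $\wh{\calV}_1(\varnothing)\subseteq\calV_1(\varnothing)$). By \Cref{lmm:compute-calV} the system is feasible for every $\vv\in\wh{\calV}_\ell(o)$, so $(\varpi,\vv')$ exists, and \Cref{eq:tV-omega-t-cons-v-in-V} forces the updated target to lie in the $\wh{\calV}_{\ell+1}$-polytope of the updated state-action pair, restoring the invariant. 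For (ii), each iteration solves a single LP whose description is polynomial in $|S|{\cdot}|A|{\cdot}|\Omega|$, $H$, and the sizes of the half-space descriptions of the relevant $\wh{\calV}_{\ell+1}(o')$, which are $\poly(|S|{\cdot}|A|{\cdot}|\Omega|,H,1/\epsilon)$ by \Cref{lmm:compute-calV}; there are fewer than $H$ iterations, plus the one initial LP $\max_{\vv\in\wh{\calV}_1(\varnothing)}v^\sfP$.

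For (iii) I would use backward induction on $h=H,\dots,1$ to prove, for every $o$, every $\vv\in\wh{\calV}_h(o)$, and every history $\sigma$ ending in $o$, that the subpolicy $\pi_{\vv,o}$ produced by the procedure from target $\vv$ and last pair $o$ satisfies (a) $V_h^{\sfA,\pi_{\vv,o},\perp}(\sigma)=v^\sfA$ and $V_h^{\sfP,\pi_{\vv,o},\perp}(\sigma)=v^\sfP$, and (b) $V_h^{\sfA,\pi_{\vv,o},\rho}(\sigma)\le v^\sfA$ for every deviation plan $\rho$. The base case $h=H$ is immediate since $\wh{\calV}_H(o)=\{(0,0)\}$ and stage-$H$ rewards vanish. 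For the step, let $(\varpi,\vv')$ be the chosen solution at $(\vv,o)$. Part (a): under $\perp$ we have $\wtomega^\sfA=\omega^\sfA$, $\tda^\sfA=a^\sfA$, so the recursion \Cref{eq:V} for each player, after substituting the equalities $V_{h+1}^{\sfP,\pi_{\vv,o},\perp}(\sigma;\dots)={v'}^{\sfP}(\dots)$ and $V_{h+1}^{\sfA,\pi_{\vv,o},\perp}(\sigma;\dots)={v'}^{\sfA}(\dots)$ supplied by the inductive hypothesis (a), becomes exactly the value-function constraint \Cref{eq:tV-omega-t-cons-v}, whose right side is $\vv$. Part (b): expand $V_h^{\sfA,\pi_{\vv,o},\rho}(\sigma)$ one step via \Cref{eq:V} with $\wtomega^\sfA=\rho(\sigma;\omega^\sfA)$ and $\tda^\sfA=\rho(\sigma;\omega^\sfA,a^\sfA)$, bound each continuation above by ${v'}^{\sfA}(\dots)$ using the inductive hypothesis (b) at the onward node (valid since each onward target lies in $\wh{\calV}_{h+1}(\cdot)$), and then maximize over the agent's stage-$h$ freedom---a report depending only on $\omega^\sfA$ and an action depending on the recommendation $a^\sfA$; the result is exactly the right-hand side of the IC constraint \Cref{eq:tV-omega-t-cons-ic}, which $(\varpi,\vv')$ satisfies, hence $\le v^\sfA$. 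Specializing to $h=1$, $o=\varnothing$, $\vv=\argmax_{\vv\in\wh{\calV}_1(\varnothing)}v^\sfP$: part (b) makes $\perp$ a best response, so $\pi$ is IC and $\perp$ satisfies \Cref{eq:main-prob-const-1}, while part (a) together with $\max_{\vv\in\wh{\calV}_1(\varnothing)}v^\sfP\ge\max_{\vv\in\calV_1(\varnothing)}v^\sfP-\epsilon=V^*-\epsilon$ (\Cref{lmm:compute-calV}) gives $V_1^{\sfP,\pi,\perp}(\varnothing)=v^\sfP\ge V^*-\epsilon$; so $\pi$ is $\epsilon$-optimal.

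I expect part (b) of the backward induction---a one-shot-deviation principle adapted to the communicate-then-act structure---to be the main obstacle. The delicate points are that both halves of the inductive hypothesis are needed: (a) certifies that truthful continuation attains ${v'}^{\sfA}(\bar\sigma)$ and (b) that no continuation exceeds it, so ${v'}^{\sfA}(\bar\sigma)$ is a \emph{tight} surrogate for the agent's best continuation value---which is precisely what \Cref{eq:tV-omega-t-cons-ic} treats it as; and the bookkeeping must match the nested maximization there (the agent commits to a report $\wtomega^\sfA$ from $\omega^\sfA$ alone, then a best response $\tda^\sfA$ to each recommendation, then an optimal continuation). A subsidiary subtlety, handled by proving the induction for \emph{every} target in $\wh{\calV}_{h+1}(\cdot)$, is that the procedure re-solves the one-step system at each node rather than re-using the witness of $\vv'(\bar\sigma)\in\calV_{h+1}(\cdot)$ from \Cref{lmm:inducible-set}; since the deterministic rule returns the same solution whenever it is re-invoked at the same $(\vv,o)$, the subtree the procedure actually generates below $\bar\sigma$ is the one covered by the hypothesis.
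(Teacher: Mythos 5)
Your proposal is correct and takes essentially the same approach as the paper: the paper's proof simply runs the forward procedure of \Cref{fig:forward}, notes that each query solves at most $H$ polynomial-size linear constraint satisfiability problems, and defers IC and $\epsilon$-optimality to ``an inductive argument'' based on \Cref{lmm:compute-calV}. The two-part backward induction you carry out (truthful values attain $\vv$, no deviation exceeds $v^\sfA$), together with the feasibility invariant and the deterministic tie-breaking needed for the per-query outputs to cohere into a single policy, is exactly the argument the paper leaves implicit.
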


\section{Learning to Commit}
\label{sc:learning}

\newcommand{\whmu}{\widehat{\mu}}
\newcommand{\whphi}{\widehat{\phi}}

We now turn to an episodic online learning setting where the transition model $p: S \times A \rightarrow \Delta(S \times \Omega)$ is not known to the players beforehand. Let there be $T$ episodes. At the beginning of each episode, the principal commits to a new policy based on the outcomes of the previous episodes. Each episode proceeds in $H$ time steps the same way as the model defined in \Cref{sec:prelim}.

We present a learning algorithm that guarantees sublinear regrets for both players under hindsight observability. 
The algorithm is {\em centralized} and relies on the agent behaving truthfully. It does not guarantee exact IC during the course of learning but IC in the limit when the number of episodes approaches infinity. 
Indeed, since the model is unknown to both players, IC in the limit is a more relevant concept as the agent cannot decide how to optimally deviate from their truthful response, either. 
In this case, the sublinear regret the algorithm guarantees for the agent should in many scenarios be sufficient for incentivizing for the agent to participate and follow the centralized learning protocol.

The players' regrets are defined as follows:
\begin{equation*}
\reg^\sfP = \sum_{t=1}^T \left( V^* - V_1^{\sfP,\pi_t,\perp}(\varnothing) \right)
\qquad\text{and}\qquad
\reg^\sfA = \sum_{t=1}^T \left( \max_{\rho} V_1^{\sfA,\pi_t, \rho}(\varnothing) - V_1^{\sfA,\pi_t,\perp}(\varnothing) \right),
\end{equation*}
where $V^*$ is the optimal value of \Cref{eq:main-prob} and $\pi_t$ denotes the policy the principal commits to in the $t$-th episode. 
In words, the principal's regret $\reg^\sfP$ is defined with respect to the optimal policy under the true model. The agent's regret $\reg^\sfA$ is defined with respect to his optimal response to each $\pi_t$, which is a dynamic regret as the benchmark changes across the episodes.

\subsection{Learning Algorithm}

\paragraph{Reward-free Exploration}
Our learning algorithm is based on {\em reward-free exploration}, which is an RL paradigm where learning happens before a reward function is provided \cite{jin2020reward}. It has been shown in a series of works that efficient learning is possible under this paradigm \cite{jin2020reward,kaufmann2021adaptive,menard2021fast}.
In particular, we will use the sample complexity bound in \Cref{lmm:empirical-model-bound}.
At a high level, our algorithm proceeds by first conducting reward-free exploration to learn a sufficiently accurate estimate of the true model. Based on the estimate we then solve a relaxed version of the policy optimization problem \Cref{eq:main-prob} to obtain a policy. Using this policy in the remaining episodes guarantees sublinear regret for both players.

\begin{lemma}[{\citep[Lemma~3.6 restated]{jin2020reward}}]
\label{lmm:empirical-model-bound}
Consider an (single-player) MDP $(S, A, p)$ (without any reward function specified) with horizon length $H$. 
There exists an algorithm which learns a model $\whp$ after $\wt{\calO} \left( \frac{H^5 \abs{S}^2 \cdot \abs{A}} {\delta^2} \right)$ episodes of exploration, such that with probability at least $1- q$, for any reward function $r$ and policy $\pi$, it holds that 
\[
\abs{V_1^{r,\pi}(s) - \whV_1^{r,\pi}(s)} \le \delta/2
\]
for all states $s$, where $V_1^{r,\pi}$ and $\whV_1^{r,\pi}$ denote the value functions under reward function $r$ and models $p$ and $\whp$, respectively.\footnote{The notation $\wt{\calO}$ omits logarithmic factors. In the original statement of \citet{jin2020reward}, $\pi$ is non-stationary (time-dependent) but {independent} of the history. However, the proof of the lemma also applies to history-dependent policies. The dependence on $H$ in the sample complexity can be further improved with better reward-free exploration algorithms \cite{kaufmann2021adaptive,menard2021fast}, but this is not a focus of ours.}
\end{lemma}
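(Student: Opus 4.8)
\textbf{Proof plan for \Cref{lmm:empirical-model-bound}.}

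The statement to be proved is the restatement of Lemma~3.6 of \citet{jin2020reward}, so strictly speaking it is a cited result; the ``proof'' I would provide is a pointer plus a self-contained sketch of why the argument goes through, including the one adaptation the footnote alludes to (history-dependent policies). The plan is to recall the reward-free exploration guarantee as proved by \citet{jin2020reward}: their algorithm runs an exploration phase that, for each $(h,s,a)$, drives the visitation probability of under-explored state-action pairs below a threshold, and then forms empirical transition estimates $\whp_h$ from the collected data. First I would invoke their main exploration lemma (the ``EULER''-style or UCB-based exploration subroutine) to assert that after $N = \wt{\calO}(H^5 |S|^2 |A| / \delta^2)$ episodes, with probability at least $1-q$, the empirical model $\whp$ satisfies a per-step accuracy bound of the form $\sum_{s'} |\,\whp_h(s' \mid s,a) - p_h(s' \mid s,a)\,| \le \delta / (c H)$ (in a weighted/visitation sense) simultaneously for all $h$ and all state-action pairs, where $c$ is an absolute constant. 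This is the crux and it is exactly what their analysis delivers; I would not reprove it.

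Given this uniform model-accuracy event, the second step is the standard simulation-lemma argument: for any bounded reward function $r$ (with per-step rewards in $[0,1]$) and any policy $\pi$, the value gap $|V_1^{r,\pi}(s) - \whV_1^{r,\pi}(s)|$ telescopes over the $H$ steps into a sum of one-step transition-model discrepancies, each multiplied by a future value bounded by $H$. Concretely, writing $V$ and $\whV$ for the value functions under $p$ and $\whp$, one shows $V_h^{r,\pi} - \whV_h^{r,\pi} = \E^{\whp,\pi}\!\big[\sum_{\ell \ge h} (p_\ell - \whp_\ell) V_{\ell+1}^{r,\pi}\big]$ along trajectories; bounding $\|V_{\ell+1}^{r,\pi}\|_\infty \le H$ and summing the per-step total-variation errors gives $|V_1^{r,\pi}(s) - \whV_1^{r,\pi}(s)| \le H \cdot H \cdot \delta/(cH) \le \delta/2$ for an appropriate constant $c$. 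The key point making this uniform over all $(r,\pi)$ is that the model-accuracy event is independent of $r$ and $\pi$ — exploration happens ``reward-free,'' so no union bound over policies or rewards is needed.

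The one genuine adaptation — and the only place one must be slightly careful — is that the original statement phrases the guarantee for time-dependent but history-independent policies, whereas we need it for history-dependent $\pi$. The plan here is to observe that the simulation lemma is indifferent to what the policy conditions on: at each step the transition kernel $p_h(\cdot \mid s_h, a_h)$ depends only on the current state-action pair, so the telescoping identity above holds verbatim with the expectation taken over history-dependent trajectories, and the bound $\|V_{\ell+1}^{r,\pi}\|_\infty \le H$ is unaffected. Equivalently, a history-dependent policy on $(S,A,p)$ is a (history-independent) policy on the history-augmented MDP whose state is the trajectory prefix, and the exploration bound for $(S,A,p)$ already controls the transition errors at the level of $(s,a)$, which is all the augmented simulation lemma needs. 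I expect the main ``obstacle'' to be purely expository: making clear which parts are quoted verbatim from \citet{jin2020reward} (the exploration sample-complexity bound) versus which are the elementary simulation-lemma steps we re-derive to confirm the history-dependent extension; there is no new technical difficulty.
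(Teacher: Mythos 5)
Your proposal is correct and matches the paper's treatment: the paper gives no proof of this lemma at all, citing it directly from \citet{jin2020reward} and relegating the only nontrivial point (extension to history-dependent policies) to a footnote asserting that ``the proof of the lemma also applies.'' Your sketch of why that footnote is justified --- the simulation-lemma telescoping depends only on per-$(s,a)$ transition accuracy, which the reward-free exploration event controls independently of what the policy conditions on --- is exactly the right argument and is, if anything, more explicit than what the paper provides.
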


With the above result, we can learn a model $\whp$ for our purpose. 
In what follows, we let $\wh{V}_h^{\sfP,\pi,\rho}$ and $\wh{V}_h^{\sfA,\pi,\rho}$ denote the players' value functions in model $\whp$ (i.e., by replacing $p$ in \Cref{eq:V} with $\whp$). 
\Cref{lmm:Pi-hat-bounds} then translates \Cref{lmm:empirical-model-bound} to our setting.
Note that under hindsight observability the process facing the principal and the agent jointly during the learning process is effectively an MDP, where the effective state space is $O\times \Omega$.
An effective state, say $\theta = (s, \va, \vomega)$, consists of the state-action pair $(s,\va)$ in the previous step and the observations $\vomega$ in the current. When a joint action $\va'$ is performed, $\theta$ transitions to $\theta' = (s', \va', \vomega')$ with probability $p_{h-1}(s', \vomega' \given s, \va)$.

\begin{lemma}
\label{lmm:Pi-hat-bounds}
A model $\whp$ can be learned after $\wt{\calO} \left( \frac{H^5 \abs{S}^2 \abs{A}^3 \abs{\Omega}^2} {\delta^2} \right)$ episodes of exploration, such that $\abs{ V_1^{\sfA,\pi,\rho}(\varnothing) - \wh{V}_1^{\sfA,\pi,\rho}(\varnothing)} \le \delta/2$ and $\abs{ V_1^{\sfP,\pi,\rho}(\varnothing) - \wh{V}_1^{\sfP,\pi,\rho}(\varnothing)} \le \delta/2$ with probability at least $1- q$ for any policy $\pi$ and deviation plan $\rho$.
\end{lemma}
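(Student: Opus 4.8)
The plan is to reduce \Cref{lmm:Pi-hat-bounds} to \Cref{lmm:empirical-model-bound} by viewing the joint learning process under hindsight observability as a single-player MDP over the effective state space $\Theta \coloneqq O \times \Omega$ (with a dummy initial state to account for $p_0$). First I would make this reduction explicit: an effective state $\theta = (s, \va, \vomega)$ encodes the previous state-action pair together with the current observations, the action set at each step is the joint action set $A$, and the transition law is $\theta \mapsto \theta' = (s', \va', \vomega')$ with probability $p_{h-1}(s', \vomega' \given s, \va)$. Since all of $s$, $\va$, $\vomega$ are revealed at the end of each step, this process is a genuine MDP observable to both players; its state space has size $|O\times\Omega| = |S|\,|A|\,|\Omega|$ and its action space has size $|A|$. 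Running the reward-free exploration algorithm of \Cref{lmm:empirical-model-bound} on this MDP yields, after $\wt{\calO}\big(H^5 (|S||A||\Omega|)^2 |A| / \delta^2\big) = \wt{\calO}\big(H^5 |S|^2 |A|^3 |\Omega|^2 / \delta^2\big)$ episodes, an estimated transition model $\whp$ (equivalently, an estimate of each $p_{h-1}$) such that, with probability at least $1-q$, for every reward function and every (history-dependent) policy on this MDP, the value functions under $p$ and $\whp$ differ by at most $\delta/2$ at the initial state.

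The second step is to express each of $V_1^{\sfA,\pi,\rho}$ and $V_1^{\sfP,\pi,\rho}$ as the value of \emph{some} fixed policy against \emph{some} fixed reward function on the effective MDP, so that the guarantee of \Cref{lmm:empirical-model-bound} applies verbatim. Given a principal policy $\pi$ and a deviation plan $\rho$, the pair $(\pi,\rho)$ determines, at each effective state and history, a distribution over the realized joint action and the reward collected: the principal draws $\va \sim \pi(\sigma; \omega^\sfP, \rho(\sigma;\omega^\sfA))$, the agent performs $\tda^\sfA = \rho(\sigma;\omega^\sfA, a^\sfA)$, and the reward $r_h^\sfA(s, a^\sfP, \tda^\sfA)$ (resp.\ $r_h^\sfP$) is incurred. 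This is exactly a (history-dependent) policy together with a reward function on the effective MDP, and the resulting value function coincides with $V_h^{\sfA,\pi,\rho}$ (resp.\ $V_h^{\sfP,\pi,\rho}$) as defined by \Cref{eq:V} — both are the same sum of expected per-step rewards over the same transition law. The footnote to \Cref{lmm:empirical-model-bound} is what licenses the use of history-dependent policies here. Applying the lemma's bound to this policy/reward pair for each player gives $|V_1^{\sfA,\pi,\rho}(\varnothing) - \wh V_1^{\sfA,\pi,\rho}(\varnothing)| \le \delta/2$ and likewise for $\sfP$, which is the claim. (Rewards lie in $[0,1]$ per our normalization, matching the lemma's setup, and the uniform-over-all-reward-functions, uniform-over-all-policies nature of the guarantee means the single estimated model $\whp$ works simultaneously for all $\pi,\rho$ and both players.)

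The main obstacle is a bookkeeping one rather than a conceptual one: carefully checking that the effective MDP is correctly set up so that a single model estimate $\whp$ transfers to \emph{both} players' value functions and to \emph{all} $(\pi,\rho)$ at once — in particular that the deviation plan $\rho$, which reroutes both the report $\wtomega^\sfA$ and the executed action $\tda^\sfA$, folds cleanly into "a policy on the effective MDP" without changing the transition dynamics (it does not: transitions depend on $(s, \va)$ through $p_{h-1}$, and $\va$ is the realized joint action, so $\rho$ only changes which action is realized, i.e.\ the policy, not the kernel). One should also double-check the accounting of the initial distribution $p_0$: treating it as a transition out of a fixed dummy state $\theta_0$ at step $0$ keeps the horizon at $H$ and increases $|S|$ only by a constant, absorbed into the $\wt{\calO}$. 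With these checks in place the sample complexity $\wt{\calO}(H^5 |S|^2 |A|^3 |\Omega|^2 / \delta^2)$ and the two $\delta/2$ bounds follow immediately from \Cref{lmm:empirical-model-bound}.
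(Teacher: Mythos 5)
Your proposal is correct and matches the paper's own justification: the paper likewise reduces to the effective MDP over $O \times \Omega$ (state space of size $|S||A||\Omega|$, action space $A$), invokes \Cref{lmm:empirical-model-bound} to get the $\wt{\calO}(H^5|S|^2|A|^3|\Omega|^2/\delta^2)$ bound, and relies on the fact that each $(\pi,\rho)$ pair and reward choice corresponds to a history-dependent policy and reward function on that MDP. Your additional checks (that $\rho$ alters only the realized action, hence the policy and not the kernel, and the handling of $p_0$ via a dummy state) are exactly the bookkeeping the paper leaves implicit.
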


Therefore, the value functions change smoothly as the learned model $\whp$ approaches $p$.
However, this smoothness is insufficient for deriving a sublinear bound on the principal's regret because of the agent's incentive constraints in our problem.
Roughly speaking, the set of IC policies does not change smoothly with $\whp$, even though the value functions do. Hence, even an infinitesimal difference between $\whp$ and $p$ may lead to a jump between the IC policy sets under these two models and, in turn, a gap between the values of the optimal policies.

\paragraph{Approximate IC Relaxation}
To deal with this issue, we relax the incentive constraints, allowing small violations to the constraints. Such violations are inevitable if we aim to achieve a near-optimal value under the true model $p$ but only know an estimate $\whp$ of the true model. On the positive side, given the sublinear regret guarantee for the agent, the violation diminishes with the number of episodes.
We define $\delta$-IC policies below.

\begin{definition}[$\delta$-IC policy]
A policy $\pi$ is $\delta$-IC (w.r.t. model $\whp$) if $\wh{V}_1^{\sfA,\pi,\perp}(\varnothing) \ge \wh{V}_1^{\sfA,\pi,\rho} (\varnothing) - \delta$ for every possible deviation plan $\rho$ of the agent.
A $\delta$-IC policy is said to be $\epsilon$-optimal if $\wh{V}_1^{\sfP,\pi,\perp}(\varnothing) \ge V^* - \delta$, where $V^*$ is the optimal value of \Cref{eq:main-prob} (under $p$).
\end{definition}

That is, in response to a $\delta$-IC policy, the agent can improve his overall expected payoff by no more than $\delta$ if he deviates from the truthful response. 
We assume that the agent will not deviate for such a small benefit, and we evaluate the value of a $\delta$-IC policy based on the agent's truthful response. This is how the $\epsilon$-optimality is defined above, where we compare against the optimal value $V^*$ in \Cref{eq:main-prob}, which is obtained under a more stringent setting without any relaxation of the agent's incentive.
In other words, we relax the feasible space and compare the solution obtained in this relaxed space with the optimum over the smaller original feasible space. Such relaxations are common in the optimization literature, and they are crucial for resolving the non-smooth issue.

Let $\wh{\Pi}_\delta$ and $\Pi_\delta$ denote the set of $\delta$-IC policies under $\whp$ and $p$, respectively.
The relaxation immediately results in $\wh{\Pi}_{\delta} \supseteq \Pi_0$ for the model $\whp$ stated in \Cref{lmm:Pi-hat-bounds}.
As a result, optimizing over $\wh{\Pi}_{\delta}$ ensures that the optimal value yielded is as much (up to a small error) as the optimal value $V^*$ over $\Pi_0$.
Meanwhile, the value loss introduced by this relaxation for the agent is also small (bounded by $\delta$).

\medskip

With the above results, our learning algorithm proceeds as follows.

\begin{framed}
\vspace{-5mm}
\begin{enumerate}
[itemsep=3pt,leftmargin=3mm]
\item Run reward-free exploration to obtain a model $\whp$ as stated in \Cref{lmm:Pi-hat-bounds}.

\item 
Compute a $\delta$-optimal $\delta$-IC policy in $\whp$ and use it in the remaining episodes.
\end{enumerate}
\vspace{-3mm}
\end{framed}

The near-optimal policy in Step~2 can be computed efficiently according to \Cref{lmm:opt-Pi-hat-eps}, via an approach similar to the one in \Cref{sc:backward-induction}.
This gives an efficient algorithm with sublinear regrets for both players. We present \Cref{thm:learning}.

\begin{restatable}{lemma}{lmmoptpihateps}
\label{lmm:opt-Pi-hat-eps}
There exists an $\epsilon$-optimal $\delta$-IC policy $\pi$ such that, for any given sequence $(\sigma; \omega^\sfP, \wtomega^\sfA) \in \Sigma \times \Omega$, the distribution $\pi(\cdot \given \sigma; \omega^\sfP, \wtomega^\sfA)$ can be computed in time $\poly(|S| {\cdot} |A| {\cdot} |\Omega|, H, 1/\epsilon, \log(1/\delta))$.
\end{restatable}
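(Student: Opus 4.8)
The plan is to mirror the development in \Cref{sc:backward-induction} and \Cref{sc:forward-construction}, but with the exact IC constraints \Cref{eq:tV-omega-t-cons-ic} replaced by a relaxed version that tolerates a per-step slack, and carried out entirely with respect to the learned model $\whp$. First I would define, for each time step $h$ and state-action pair $o\in O$, a \emph{$\gamma$-relaxed inducible value set} $\calV_h^\gamma(o)$, consisting of value vectors inducible by a one-step policy $\varpi$ and onward vectors $\vv'\in\wh{\calV}_{h+1}^\gamma(\cdot)$ that satisfy \Cref{eq:tV-omega-t-cons-v} (with $p$ replaced by $\whp$) together with the IC inequality \Cref{eq:tV-omega-t-cons-ic} weakened by an additive $\gamma$ on the right-hand side. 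Setting $\gamma = \delta/H$ and telescoping the per-step violations over the $H$ steps (exactly as in the proof that IC-at-each-step implies global IC, which underlies \Cref{lmm:inducible-set}) yields that any policy assembled from these relaxed one-step solutions is globally $\delta$-IC w.r.t.\ $\whp$. The approximate-onward-value machinery of \Cref{eq:tV-omega-t-cons-v-in-V} and the linearization via the auxiliary variables $y(\cdot)$ and $\zz(\cdot)$ carries over verbatim, since adding a constant $\gamma$ to the right-hand side of an already-linear constraint keeps it linear.

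Next I would invoke \Cref{lmm:compute-calV} with its discretization argument: discretize $[0,H]^2$ into a polynomially-sized point set (using the slicing in \Cref{fig:slice} so that the relaxed IC constraints are met exactly at the boundary), test each point for $\gamma$-relaxed inducibility by solving the linearized constraint-satisfiability problem built from \Cref{eq:tV-omega-t-cons-v,eq:tV-omega-t-cons-ic,eq:tV-omega-t-cons-v-in-V} under $\whp$, and take convex hulls in half-space representation. The running time is $\poly(|S|\cdot|A|\cdot|\Omega|,H,1/\epsilon)$ exactly as before; the only new dependence is on $\log(1/\delta)$, which enters because the coefficient $\gamma=\delta/H$ in the relaxed IC constraints has bit-length $O(\log(1/\delta) + \log H)$, so the LP solver's polynomial dependence on input bit-length contributes the extra $\log(1/\delta)$ factor. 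For the forward step I would run the procedure of \Cref{fig:forward} on the relaxed polytopes $\wh{\calV}_h^\gamma(o)$, starting from $\vv\leftarrow\argmax_{\vv\in\wh{\calV}_1^\gamma(\varnothing)}v^\sfP$, which computes $\pi(\cdot\given\sigma;\omega^\sfP,\wtomega^\sfA)$ on-the-fly in the claimed time.

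It remains to verify $\epsilon$-optimality in the sense of the $\delta$-IC definition, i.e.\ $\wh V_1^{\sfP,\pi,\perp}(\varnothing)\ge V^*-\delta$ (taking $\epsilon$ and $\delta$ of the same order; the statement's separate $\epsilon$ absorbs the discretization error $\epsilon$ of \Cref{lmm:compute-calV}). Here I would use that the true optimal policy $\pi^*$ achieving $V^*$ is exactly IC under $p$, so by \Cref{lmm:Pi-hat-bounds} it is $\delta$-IC under $\whp$ and its principal value under $\whp$ is within $\delta/2$ of $V^*$; hence $\pi^*$ is feasible for the relaxed problem and $\max_{\vv\in\calV_1^\gamma(\varnothing)}v^\sfP \ge V^* - \delta/2$. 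Combining with the $\epsilon$-approximation guarantee $\max_{\vv\in\wh{\calV}_1^\gamma(\varnothing)}v^\sfP\ge\max_{\vv\in\calV_1^\gamma(\varnothing)}v^\sfP-\epsilon$ from the discretization gives the bound. The main obstacle is the bookkeeping in the telescoping step: one must check that distributing the slack $\delta/H$ across steps in \Cref{eq:tV-omega-t-cons-ic} and recursing through the onward-value constraints does indeed produce a policy that is $\delta$-IC globally rather than merely $H\cdot\gamma$-IC with a looser constant, and that the relaxed sets remain nonempty and bounded so that the linearization equivalence following \Cref{eq:tV-omega-linear-cons-z-Ab} still holds; both follow by the same induction as \Cref{lmm:inducible-set} but require care with the conditional-probability reweighting $p_{h-1}(s,\omega^\sfP\given o,\omega^\sfA)$.
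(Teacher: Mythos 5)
Your approach---relaxing the per-step IC constraint \Cref{eq:tV-omega-t-cons-ic} by an additive $\gamma=\delta/H$ and telescoping---differs structurally from the paper's, and it has a genuine gap at the optimality step. The telescoping direction you describe is fine: per-step $\gamma$-relaxed IC at every history does imply global $H\gamma$-IC. The problem is the converse containment you need in the last paragraph, namely that the optimal exactly-IC policy $\pi^*$ under $p$ is \emph{feasible} for your per-step relaxed problem under $\whp$. Global $\delta$-IC under $\whp$ (which is all that \Cref{lmm:Pi-hat-bounds} gives you, since reward-free exploration only controls value functions evaluated from the start) does not imply per-step $(\delta/H)$-IC under $\whp$. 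The per-step constraints \Cref{eq:tV-omega-t-cons-ic} are \emph{conditional} on reaching a history $o$ and observing $\omega^\sfA$; when the reaching probability is small, the conditional probabilities $\whp_{h-1}(s,\omega^\sfP\given o,\omega^\sfA)$ can differ arbitrarily from $p_{h-1}(s,\omega^\sfP\given o,\omega^\sfA)$ while all value functions from the root stay within $\delta/2$. So $\pi^*$ may violate your relaxed per-step constraints at low-probability histories by far more than $\delta/H$, your relaxed feasible set may exclude it, and the bound $\max_{\vv\in\calV_1^\gamma(\varnothing)}v^\sfP\ge V^*-\delta/2$ does not follow. This is exactly the ``non-smoothness of the IC set'' issue the learning section warns about, and it is why the relaxation must be imposed where the model error is actually controlled: at the root.

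The paper's proof instead adds a third coordinate $v_*^\sfA$ to the value vectors, tracking (an upper bound on) the agent's best-deviation value via the recursion \Cref{eq:max-attainable}, whose outer sum $\sum_{\omega^\sfA}p_{h-1}(\omega^\sfA\given o)\max_{\wtomega^\sfA}(\cdots)$ turns everything back into unconditional expectations; no IC constraint is imposed at intermediate steps, and the single constraint $v^\sfA\ge v_*^\sfA-\delta$ is applied only at $\calV_1(\varnothing)$. That feasible set manifestly contains every globally $\delta$-IC policy under $\whp$, hence contains $\pi^*$, and the grid discretization of $[0,H]^3$ (the careful 2D slicing is no longer needed, precisely because $\delta$-IC is a root-level condition) gives the claimed running time. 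If you want to keep a per-step formulation you would have to weight the per-step slack by the probability of reaching each history, which is essentially re-deriving the paper's aggregated constraint.
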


\begin{restatable}{theorem}{thmlearning}
\label{thm:learning}
There exists an algorithm that guarantees regret $\wt{\calO}(\zeta^{1/3} T^{2/3})$ for both players with probability $1-q$, where $\zeta = H^5 \abs{S}^2 \abs{A}^3 \abs{\Omega}^2$. The computation involved in implementing the algorithm takes time $\poly(|S|{\cdot}|A|{\cdot}|\Omega|, H, T)$.
\end{restatable}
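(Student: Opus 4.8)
The plan is to analyse the two-phase algorithm stated before the theorem---reward-free exploration, followed by commitment to a single near-optimal relaxed policy---and to balance the length of the exploration phase against the relaxation parameter $\delta$. First I would fix the ``good event'' $\calE$ on which the conclusion of \Cref{lmm:Pi-hat-bounds} holds for the learned model $\whp$, i.e.\ $\abs{V_1^{\sfP,\pi,\rho}(\varnothing)-\wh{V}_1^{\sfP,\pi,\rho}(\varnothing)}\le\delta/2$ and $\abs{V_1^{\sfA,\pi,\rho}(\varnothing)-\wh{V}_1^{\sfA,\pi,\rho}(\varnothing)}\le\delta/2$ simultaneously for all $\pi$ and $\rho$. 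By \Cref{lmm:Pi-hat-bounds}, $\calE$ has probability at least $1-q$ provided the exploration phase is run for $N=\wt{\calO}(\zeta/\delta^2)$ episodes (with $\wt{\calO}$ absorbing the $\log(1/q)$ factor). Everything after this point is deterministic conditional on $\calE$.

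Second, I would decompose each player's regret into the contribution of the first $N$ (exploration) episodes and that of the remaining $T-N$ (exploitation) episodes. Since every value lies in $[0,H]$, each of the first $N$ episodes contributes at most $H$ to either regret, so the exploration contribution is at most $NH=\wt{\calO}(\zeta H/\delta^2)$. In each exploitation episode the principal commits to the fixed $\delta$-optimal $\delta$-IC policy $\pi$ (w.r.t.\ $\whp$) produced by \Cref{lmm:opt-Pi-hat-eps}. On $\calE$, combining the accuracy bound with $\delta$-optimality gives $V_1^{\sfP,\pi,\perp}(\varnothing)\ge\wh{V}_1^{\sfP,\pi,\perp}(\varnothing)-\delta/2\ge V^*-\tfrac{3}{2}\delta$, so each exploitation episode costs the principal at most $\tfrac{3}{2}\delta$. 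For the agent, for any deviation plan $\rho$, combining the accuracy bound with the $\delta$-IC inequality $\wh{V}_1^{\sfA,\pi,\perp}(\varnothing)\ge\wh{V}_1^{\sfA,\pi,\rho}(\varnothing)-\delta$ gives $V_1^{\sfA,\pi,\rho}(\varnothing)\le\wh{V}_1^{\sfA,\pi,\rho}(\varnothing)+\delta/2\le\wh{V}_1^{\sfA,\pi,\perp}(\varnothing)+\tfrac{3}{2}\delta\le V_1^{\sfA,\pi,\perp}(\varnothing)+2\delta$; taking the maximum over $\rho$ shows each exploitation episode costs the agent at most $2\delta$. Hence $\reg^\sfP,\reg^\sfA\le\wt{\calO}(\zeta H/\delta^2+T\delta)$ on $\calE$.

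Third, I would optimise over $\delta$: the choice $\delta=\Theta((\zeta/T)^{1/3})$ (up to $\poly(H)$ factors) balances the two terms and yields $\reg^\sfP,\reg^\sfA=\wt{\calO}(\zeta^{1/3}T^{2/3})$, the claimed bound. (When $T$ is so small that $N>T$ the claim is vacuous, as the trivial bound $\reg^\sfP,\reg^\sfA\le TH$ already suffices.) For the running time: the reward-free exploration of \Cref{lmm:empirical-model-bound} runs in time polynomial in the instance size and in $N=\poly(T,\dots)$; in each of the $T-N$ exploitation episodes the principal makes $H$ calls to the on-the-fly policy oracle of \Cref{lmm:opt-Pi-hat-eps}, each costing $\poly(\abs{S}{\cdot}\abs{A}{\cdot}\abs{\Omega},H,1/\epsilon,\log(1/\delta))$ with $\epsilon=\delta$; since $1/\delta=\poly(T,\dots)$, the total is $\poly(\abs{S}{\cdot}\abs{A}{\cdot}\abs{\Omega},H,T)$.

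The only non-routine point is that the bound on the principal's exploitation regret compares against $V^*$, the optimum over \emph{exactly}-IC policies of the \emph{true} model $p$, whereas the algorithm optimises over $\delta$-IC policies of the \emph{estimated} model $\whp$---and a priori the IC-feasible set can move discontinuously between the two models (the non-smoothness issue discussed before the statement). This is precisely what the relaxation to $\delta$-IC and \Cref{lmm:opt-Pi-hat-eps} are designed to handle: on $\calE$ one has $\wh{\Pi}_\delta\supseteq\Pi_0$ (an exactly-IC policy of $p$ is $\delta$-IC under $\whp$, by \Cref{lmm:Pi-hat-bounds}), so the relaxed problem's optimum is within $\delta/2$ of $V^*$ and a $\delta$-optimal $\delta$-IC policy of $\whp$---the one used above---exists and is efficiently computable. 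For the present theorem I would take \Cref{lmm:opt-Pi-hat-eps} as a black box; the remaining work is the regret bookkeeping and the choice of $\delta$ above, both routine.
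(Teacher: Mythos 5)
Your proposal is correct and follows essentially the same route as the paper's proof: reward-free exploration for $\wt{\calO}(\zeta/\delta^2)$ episodes, commitment to a $\delta$-optimal $\delta$-IC policy computed via \Cref{lmm:opt-Pi-hat-eps}, per-episode exploitation regret of $O(\delta)$ for both players using the model-accuracy bound together with $\Pi_0\subseteq\wh{\Pi}_\delta$, and the balancing choice $\delta=(\zeta/T)^{1/3}$. The only (harmless) differences are slightly different constants and your more conservative per-episode bound of $H$ rather than $1$ during exploration, which is absorbed into the $\wt{\calO}$ notation anyway.
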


\section{Conclusion}
\label{sec:conclusion}

We studied a stochastic principal-agent framework and presented efficient computation and learning algorithms. 
Our model can be further extended to the setting with $n$ agents. The algorithms we presented remain efficient for any constant $n$ if approximate IC solutions are considered. Computing optimal exact IC policies for $n$ agents remain an interesting open question, as our discretization method, which operates by slicing the space, does not generalize to $n$ agents.
When $n$ is not a constant, representing games in normal-form requires space exponential in $n$, so more succinct representations are typically considered. However, it is known that in succinctly represented games even to compute an optimal correlated equilibrium in one-shot games may be NP-hard \cite{papadimitriou2005computing}.)
Our results indicate how a policy designer might interact with agents optimally. In particular implementations, the designer's incentives may not be aligned with societal benefits. In these cases, a careful analysis of the incentives and their moral legitimacy must be considered. Besides this, since the paper is theory focused, we do not feel any other potential impacts must be specifically highlighted here.

\bibliographystyle{plainnat}
% \bibliography{ref}

%%%%%%%%%%%%%%%%%%%%%%%%%%%%%%%%%%%%%%%%%%%%%%%%%%%%%%%%%%%%%%%%%%%%%%%%%%%%%%%
%%%%%%%%%%%%%%%%%%%%%%%%%%%%%%%%%%%%%%%%%%%%%%%%%%%%%%%%%%%%%%%%%%%%%%%%%%%%%%%
% APPENDIX
%%%%%%%%%%%%%%%%%%%%%%%%%%%%%%%%%%%%%%%%%%%%%%%%%%%%%%%%%%%%%%%%%%%%%%%%%%%%%%%
%%%%%%%%%%%%%%%%%%%%%%%%%%%%%%%%%%%%%%%%%%%%%%%%%%%%%%%%%%%%%%%%%%%%%%%%%%%%%%%
\clearpage
\appendix
% \onecolumn

\renewcommand\thefigure{\thesection.\arabic{figure}}

\section{Omitted Proofs}

\subsection{Omitted Proofs in \Cref{sc:computing}}

For simplicity, we write $\vec{V}_h^{\pi,\rho}(\sigma) = \left( V_h^{\sfP,\pi,\rho} (\sigma),\, V_h^{\sfA,\pi,\rho} (\sigma) \right)$ in the following proofs. 

\lmmsigmao*

\begin{proof}
Consider an arbitrary $\vv \in \calV_h(\sigma)$.
By definition, this means that there exists a policy $\pi$ and deviation plan $\rho$ such that $\vv = \vec{V}_h^{\pi,\rho}(\sigma)$ and $\rho = \argmax_{\rho'} V_h^{\sfA,\pi,\rho'}(\sigma)$.
Consider the following policy 
$\pi'$ such that:
$\pi'(\varsigma') = \pi(\varsigma)$ for all sequences $\varsigma',\varsigma \in \Sigma$ which contain $\sigma'$ and $\sigma$, respectively, as subsequences at time steps $1,\dots,h-1$. 
It follows by \Cref{eq:V} that when $o_{h-1} = o'_{h-1}$, we have $\vec{V}_h^{\pi',\rho}(\sigma') = \vec{V}_h^{\pi,\rho}(\sigma)$ and $\rho = \argmax_{\rho'} V_h^{\sfA,\pi',\rho'}$. 
Hence, $\vv \in \calV_h(\sigma')$.
Since the choice of $\vv$ is arbitrary, we get that $\calV_h(\sigma) \subseteq \calV_h(\sigma')$.
By symmetry, it follows that $\calV_h(\sigma) = \calV_h(\sigma')$.
\end{proof}

\lmminducibleset*

\begin{proof}
First, consider the ``only if'' direction of the statement. Suppose that $\vv \in \calV_h(o)$.
By definition, we have $\vv = \vec{V}_h^{\pi,\rho}(\sigma)$ for some $\pi$ and $\rho \in \argmax_{\rho'} \vec{V}_h^{\sfA,\pi,\rho'}(\sigma)$, for all $\sigma \in \Sigma_{h-1}$ that ends with $o$.  
According to a standard revelation principle argument, we can assume w.l.o.g. that $\rho$ is IC in step $h$.
Hence, by \Cref{eq:V}, we have
\begin{align}
\vv 
=&
\sum_{s, \vomega, \va} 
p_{h-1}(s, \vomega \given o) \cdot \pi(\va \given \sigma;\vomega) \cdot 
\Big( \rr_h(s, \va)  + 
\vec{V}_{h+1}^{\pi,\rho}(\sigma; s, \vomega, \omega^\sfA, \va, a^\sfA) \Big).
\end{align}
Letting $\varpi(\va \given \vomega) = \pi(\va \given \sigma; \vomega)$ for every $\vomega \in \Omega$, and $\vv'(\bar{\sigma}) = \vec{V}_{h+1}^{\pi,\rho}(\sigma; \bar{\sigma})$ for every $\bar{\sigma} \in \bar{\Sigma}$, we establish \Cref{eq:tV-omega-t-cons-v}.
Since $\rho$ is IC in step $h$, \Cref{eq:tV-omega-t-cons-ic} also follows immediately: the agent cannot benefit from any possible deviation. 
Finally, by definition, we have $\vv'(\bar{\sigma}) = \vec{V}_{h+1}^{\pi,\rho}(\sigma; \bar{\sigma}) \in \calV_{h+1}(o')$ for every $\bar{\sigma} \in \bar{\Sigma}$ that contains $o'$, so \Cref{eq:tV-omega-t-cons-v-in-V-exact} holds.

Now consider the ``if'' direction. Suppose that \Cref{eq:tV-omega-t-cons-v,eq:tV-omega-t-cons-ic,eq:tV-omega-t-cons-v-in-V-exact} hold for some $\varpi$ and $\vv'$. 
Pick arbitrary $\sigma \in \Sigma_{h-1}$ that ends with $o$. Consider a policy $\pi$ such that:
$\pi(\va \given \sigma; \vomega) = \varpi(\va \given \vomega)$ for all $\vomega \in \Omega$, and 
$\pi(\va \given \sigma; \bar{\sigma};\vomega) = \pi'(\va \given \sigma; \bar{\sigma};\vomega)$ for all $\bar{\sigma} \in \bar{\Sigma}$ and $\vomega \in \Omega$, where $\pi'$ is an arbitrary policy that induces $\vv'(\bar{\sigma})$ for every $\bar{\sigma}$ (which exists given \Cref{eq:tV-omega-t-cons-v-in-V-exact}).
Namely, $\pi$ is the same as $\varpi$ in step $h$ and switches to $\pi'$ in the  subsequent steps.
Given \Cref{eq:tV-omega-t-cons-ic}, the agent cannot benefit from any deviation at step $h$, so \Cref{eq:tV-omega-t-cons-v} gives the players' values for $\pi$ and an optimal deviation plan of the agent. Hence, $\vv \in \calV_h(\sigma) = \calV(o)$.
\end{proof}

\thmcomputecalV*

\begin{proof}
Throughout the proof, we say that the polytope $\wh{\calV}_h(o)$ is an {\em $\varepsilon$-approximation of $\calV_h(o)$} if and only if:
\begin{itemize}
\item $\wh{\calV}_h(o) \subseteq \calV_h(o)$, and 
\item for every $\vv \in \calV_h(o)$, there exists $\vv' \in \wh{\calV}_h(o)$ such that ${v'}^\sfP \ge v^\sfP - \varepsilon$ and ${v'}^\sfA = v^\sfA$.
\end{itemize}
We will show that an $\epsilon$-approximation $\wh{\calV}_1(\varnothing)$ of $\calV_1(\varnothing)$ can be computed efficiently, so that $\max_{\vv \in \wh{\calV}_1(\varnothing)} v^\sfP \ge \max_{\vv \in \calV_1(\varnothing)} v^\sfP - \epsilon$ follows readily.\footnote{In the definition of $\epsilon$-approximation, we require additionally that the projections of $\wh{\calV}_h(o)$ and $\calV_h(o)$ onto the dimension of $v^\sfA$ are the same (i.e., $v'^\sfA = v^\sfA$), so that the approximation compromises only on the principal's value. This is crucial for ensuring exact IC and smooth changes of the approximation throughout the induction process we present below.}
Meanwhile, we also show that the polytopes we compute ensures that \Cref{eq:tV-omega-t-cons-v,eq:tV-omega-t-cons-ic,eq:tV-omega-t-cons-v-in-V} are satisfiable for every $\vv \in \wh{\calV}_h(o)$.

We now prove by induction. The key is the following induction step. Suppose that the following conditions hold for all $o \in O$: 
\begin{itemize}
\item[1.]
% {\bf Property 1.}
$\wh{\calV}_{h+1}(o)$ is defined by $\calO(H/\delta)$ many linear constraints.

\item[2.] 
% {\bf Property 2.}
$\wh{\calV}_{h+1}(o)$ is an $\varepsilon$-approximation of $\calV_{h+1}(o)$.
\end{itemize}
We show that, given the above conditions, for every $o \in O$ we can compute in time polynomial in $1/\delta$ a polytope $\wh{\calV}_h(o)$ (in half-space representation) that satisfies the above conditions (for $h$), with an approximation factor $\varepsilon' = \varepsilon + \delta$ in the second condition.
Once this holds, picking $\delta = \epsilon/ H$ then gives, by induction, that $\wh{\calV}_1(\varnothing)$ is an $\epsilon$-approximation of $\calV_1(\varnothing)$ (where $\epsilon$ is the target constant in the statement of the lemma). Note that as a based case, $\{(0,0)\}$ is readily a $0$-approximation of $\calV_H(o)$ and can be defined by three linear constraints.

We proceed as follows.
For every $o \in O$, let $\overline{\calV}_h(o)$ denote the set of vectors $\vv$ satisfying \Cref{eq:tV-omega-t-cons-v,eq:tV-omega-t-cons-ic,eq:tV-omega-t-cons-v-in-V}.\footnote{Note that $\overline{\calV}_h(o)$ is different from $\calV_h(o)$: the latter, according to \Cref{lmm:inducible-set}, is defined by \Cref{eq:tV-omega-t-cons-v,eq:tV-omega-t-cons-ic,eq:tV-omega-t-cons-v-in-V-exact}, where \Cref{eq:tV-omega-t-cons-v-in-V-exact} uses the exact value sets $\calV_{h+1}(o')$, unlike the approximate ones $\wh{\calV}_{h+1}(o')$ in \Cref{eq:tV-omega-t-cons-v-in-V}.}
We follow the algorithm presented in \Cref{fig:dp} and discretize $[0,H]^2$ to construct $\wh{\calV}_h(o)$.
Specifically, we slice the space
along the dimension of the principal's value. 
We compute the intersection points of the slice lines and (the boundary of) $\overline{\calV}_h(o)$, and construct $\wh{\calV}_h(o)$ as the convex hull of the intersection points to approximate $\overline{\calV}_h(o)$.
Specifically, let 
$W = \left\{0,\ \delta,\ 2\delta,\ \dots,\ H - \delta,\ H \right\}$ contain the principal's values on the slice lines we use, and let $\calW$ be the set consisting of the following points.
\begin{itemize}
    \item 
First, for each $w \in W$, the two intersection points of the slice line at $w$ and $\overline{\calV}_h(o)$:
\begin{align*}
\check{\vv}_w \in \argmin\nolimits_{\vv \in \overline{\calV}_h(o): v^\sfP = w} v^\sfA 
\quad\text{and}\quad
\hat{\vv}_w \in \argmax\nolimits_{\vv \in \overline{\calV}_h(o): v^\sfP = w} v^\sfA.
\end{align*}

\item 
Moreover, two vertices of $\overline{\calV}_h(o)$ with the minimum and maximum values for the agent:
\begin{align*}
\check{\vv}_* \in \argmin\nolimits_{\vv \in \overline{\calV}_h(o)} v^\sfA 
\quad\text{and}\quad
\hat{\vv}_* \in \argmax\nolimits_{\vv \in \overline{\calV}_h(o)} v^\sfA.
\end{align*}
If there are multiple maximum (or minimum) vertices, we pick an arbitrary one.
\end{itemize}
An illustration is given in \Cref{fig:slice}.

\begin{figure}
\centering
\begin{tikzpicture}[scale=0.8]
\tikzset{myslice/.style={-,densely dashed,color=gray}}
\tikzset{mydot/.style={circle,fill,inner sep=0.9pt}}
\tikzset{mycross/.style={cross out, draw=black, thick, minimum size=2*(#1-\pgflinewidth), inner sep=0pt, outer sep=0pt}, mycross/.default={2.5pt}}
    % grid and axes   
    % \draw[->,thin] (-4.6,5.0) -- (-3.5,5.0) node[right] {\small$v^\sfA$};
    % \draw[->,thin] (-4.5,5.1) -- (-4.5,4) node[below] {\small$v^\sfP$};
    \draw[->,thin] (-4.0,5.2) -- (-2.5,5.2) node[above,midway] {\small$v^\sfA$};
    \draw[->,thin] (-4.5,4.7) -- (-4.5,3.2) node[left,midway] {\small$v^\sfP$};
    
    % slice
    \draw[myslice, name path=x1] (-3.6,1) -- (1.8,1) node {};
    \draw[myslice, name path=x2] (-3.6,2) -- (1.8,2) node {};
    \draw[myslice, name path=x3] (-3.6,3) -- (1.8,3) node {};
    \draw[myslice, name path=x4] (-3.6,4) -- (1.8,4) node {};
    \draw[dotted,thick,gray] (-3,0.5) -- (-3,4.5) node {};
    \draw[dotted,thick,gray] (0.9,0.5) -- (0.9,4.5) node {};

    \coordinate (a) at (-3,2.6)  ;
    \coordinate (b) at (-2.3,3.7);
    \coordinate (c) at (-1.0,4.3);
    \coordinate (d) at (0.4,3.6) ;
    \coordinate (e) at (0.9,2.4) ;
    \coordinate (f) at (0.7,1.4) ;
    \coordinate (g) at (-0.8,0.5);
    \coordinate (h) at (-1.8,0.7);
    \coordinate (i) at (-2.8,1.4);

    \node[blue!40!gray,fill=white] at (-0.8,0.2) {\small$\overline{\calV}_{h}(o)$};
    \draw (2.1,2) edge[|<->|] node[fill=white]{\small$\delta$} (2.1,1);

    \node[left] at (a) {\small$\check{\vv}_*$};
    \node[right] at (e) {\small$\hat{\vv}_*$};

    % edges  
    \draw[name path=ab,draw=none] (a) -- (b) node {};
    \draw[name path=bc,draw=none] (b) -- (c) node {};
    \draw[name path=cd,draw=none] (c) -- (d) node {};
    \draw[name path=de,draw=none] (d) -- (e) node {};
    \draw[name path=ef,draw=none] (e) -- (f) node {};
    \draw[name path=fg,draw=none] (f) -- (g) node {};
    \draw[name path=gh,draw=none] (g) -- (h) node {};
    \draw[name path=hi,draw=none] (h) -- (i) node {};
    \draw[name path=ia,draw=none] (i) -- (a) node {};

    % calculate intersection points
    \coordinate (p0) at (a);
    \coordinate (p5) at (e);
    \path[name intersections={of=ab and x3,by={p1}}];
    \path[name intersections={of=bc and x4,by={p2}}];
    \path[name intersections={of=cd and x4,by={p3}}];
    \path[name intersections={of=de and x3,by={p4}}];
    \path[name intersections={of=ef and x2,by={p6}}];
    \path[name intersections={of=fg and x1,by={p7}}];
    \path[name intersections={of=hi and x1,by={p8}}];
    \path[name intersections={of=ia and x2,by={p9}}];
    
    \node[above left] at (p2) {\small$\check{\vv}_{\delta}$};
    \node[above right] at (p3) {\small$\hat{\vv}_{\delta}$};
    \node[below right,xshift=-2mm] at (p7) {\small$\hat{\vv}_{4\delta}$};
    \node[below left,xshift=3mm] at (p8) {\small$\check{\vv}_{4\delta}$};
    
    % draw the polygons    
    \filldraw[thick,fill=blue!3!white,draw=blue!50!white,fill opacity=1] 
    (a.center) -- (b.center) -- (c.center) -- (d.center)  -- (e.center) -- (f.center) -- (g.center) -- (h.center) -- (i.center) -- cycle;
    
    \filldraw[thick,draw=red!50!white,fill=red!10!white,fill opacity=1] 
    (p0.center) -- (p1.center) -- (p2.center) -- (p3.center) -- (p4.center) -- (p5.center) -- (p6.center) -- (p7.center) -- (p8.center) -- (p9.center) -- cycle;

    \node[mydot] at (p0) {};
    \node[mydot] at (p1) {};
    \node[mydot] at (p2) {};
    \node[mydot] at (p3) {};
    \node[mydot] at (p4) {};
    \node[mydot] at (p5) {};
    \node[mydot] at (p6) {};
    \node[mydot] at (p7) {};
    \node[mydot] at (p8) {};
    \node[mydot] at (p9) {};

    % labels
    \node[red] at (-0.9,2.5) {\small$\wh{\calV}_{h}(o)$};

    % \node[mycross,rotate=45] at (-2,1.5) {};
    % \node[mycross,rotate=0] at ($(p9)!0.5!(a)$) {};
\end{tikzpicture}
\vspace{5mm}
\caption{Constructing $\wh{\calV}_{h}(o)$ as a $\delta$-approximation of $\overline{\calV}_{h}(o)$. The black points constitute $\calW$ (labels of $\check{\vv}_{2\delta}$, $\check{\vv}_{3\delta}$, $\hat{\vv}_{2\delta}$, and $\hat{\vv}_{3\delta}$ are omitted).}
\label{fig:slice}
\end{figure}

It shall be clear that the choice of these points ensures that we can approximate any inducible value vector with at most $\delta$ compromise on the principal's value and no compromise on the agent's.
(In particular, the inclusion of $\check{\vv}_*$ and $\hat{\vv}_*$ ensures that we do not miss the agent's extreme values that may not be attained at any of the slice lines.)
All the points can be computed efficiently by solving LPs that minimizes (or maximizes) $v^\sfA$ (where we also treat $\vv$ as variables in addition to the other variables), subject to the linearized version of \Cref{eq:tV-omega-t-cons-v,eq:tV-omega-t-cons-ic,eq:tV-omega-t-cons-v-in-V}, 
% eq:tV-omega-t-cons-pi
and additionally $v^\sfP = w$ when we compute $\check{\vv}_w$ or $\hat{\vv}_w$. 
The hypothesis that $\wh{\calV}_{h+1}(o)$ is defined by $\calO(H/\delta)$ linear constraints ensures that all the LPs are polynomial sized and hence can be solved efficiently.

We then compute $\wh{\calV}_h(o)$ as the convex hull of $\calW$. Given that the space is two-dimensional, this can be done efficiently via standard algorithms in computational geometry (e.g., Chan's algorithm \cite{chan1996optimal}). 
This way, the first condition in the inductive hypothesis holds for $\wh{\calV}_h(o)$ because $\wh{\calV}_h(o)$ has at most $\calO(H/\delta)$ vertices while it is in $\mathbb{R}^2$.
Meanwhile, $\wh{\calV}_h(o)$ is an $\delta$-approximation of $\overline{\calV}_h(o)$ according to the following arguments.

\begin{claim}
\label{clm:delta-approx}
$\wh{\calV}_h(o)$ is an $\delta$-approximation of $\overline{\calV}_h(o)$.
\end{claim}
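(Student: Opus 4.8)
The plan is to verify directly that $\wh{\calV}_h(o)$ satisfies the two defining properties of a $\delta$-approximation of $\overline{\calV}_h(o)$, namely (i) $\wh{\calV}_h(o) \subseteq \overline{\calV}_h(o)$, and (ii) for every $\vv \in \overline{\calV}_h(o)$ there is some $\vv' \in \wh{\calV}_h(o)$ with ${v'}^\sfP \ge v^\sfP - \delta$ and ${v'}^\sfA = v^\sfA$.

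For property (i), I would first argue that $\overline{\calV}_h(o)$ is a convex set. This follows because $\overline{\calV}_h(o)$ is the projection onto the $\vv$-coordinates of the polytope $\calP$ defined by the linearized constraints (the constraints \Cref{eq:tV-omega-t-cons-v,eq:tV-omega-t-cons-ic,eq:tV-omega-t-cons-v-in-V} after introducing the auxiliary variables $y$ and $\zz$): linear-programming feasible regions are convex, and projections of convex sets are convex. Since every point of $\calW$ lies in $\overline{\calV}_h(o)$ by construction (each is the optimum of an LP whose feasible region is exactly $\overline{\calV}_h(o)$ intersected with an affine slice, or with no extra constraint at all), and $\overline{\calV}_h(o)$ is convex, its convex hull $\wh{\calV}_h(o) = \operatorname{conv}(\calW)$ is contained in $\overline{\calV}_h(o)$.

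For property (ii), fix $\vv \in \overline{\calV}_h(o)$ with $v^\sfA = \alpha$ and $v^\sfP = \beta$. If $\alpha$ equals the agent-value of $\check{\vv}_*$ or $\hat{\vv}_*$, the extreme-vertex points handle the situation directly. Otherwise $\check{v}_*^\sfA < \alpha < \hat{v}_*^\sfA$, so the horizontal line $\{v^\sfA = \alpha\}$ crosses the interior of $\overline{\calV}_h(o)$; let $w \in W$ be the largest grid value with $w \le \beta$ (this exists since $\beta \in [0,H]$), so that $\beta - w < \delta$. I claim the point $(\beta, \alpha)$ can be "slid left" onto the hull: the slice line $v^\sfP = w$ meets $\overline{\calV}_h(o)$ in the segment $[\check{\vv}_w, \hat{\vv}_w]$, and the slice line $v^\sfP = w'$ with $w' = w + \delta$ (the next grid value, which satisfies $w' \ge \beta$, using $\delta$ divides $H$ so $H\in W$) meets it in $[\check{\vv}_{w'}, \hat{\vv}_{w'}]$; by convexity of $\overline{\calV}_h(o)$ the point $(\beta,\alpha)$ lies in the quadrilateral with vertices $\check{\vv}_w, \hat{\vv}_w, \hat{\vv}_{w'}, \check{\vv}_{w'}$, all of which are in $\calW \subseteq \wh{\calV}_h(o)$, hence in the convex hull. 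Then the point $\vv' = (w, \alpha)$ — the intersection of the segment from $(\beta,\alpha)$ back along $\{v^\sfA = \alpha\}$ with the edge $[\check{\vv}_w, \hat{\vv}_w]$ — lies in $\wh{\calV}_h(o)$, has ${v'}^\sfA = \alpha = v^\sfA$, and ${v'}^\sfP = w \ge \beta - \delta = v^\sfP - \delta$. The only subtlety is the boundary case where $\alpha$ lies outside the agent-value range of one of the two slice lines (e.g., the slice line $v^\sfP = w$ is empty or does not reach $\alpha$), which is exactly why $\check{\vv}_*$ and $\hat{\vv}_*$ are included: in that case I would instead interpolate between the nearer full slice line and the appropriate extreme vertex, again using convexity of $\overline{\calV}_h(o)$ to place $(\beta,\alpha)$ inside a triangle whose vertices are in $\calW$.

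The main obstacle I anticipate is the careful bookkeeping of these boundary cases — ensuring that for every $\alpha$ in the agent-value range of $\overline{\calV}_h(o)$ there genuinely is a hull point at agent-value exactly $\alpha$ with principal-value within $\delta$ of the target, even when $\alpha$ is not attained on any interior slice line and the relevant slices are "short" near the top or bottom of the polytope. Handling this cleanly requires exploiting convexity of $\overline{\calV}_h(o)$ together with the inclusion of the extreme-agent-value vertices $\check{\vv}_*, \hat{\vv}_*$, so that the region between consecutive sampled points (slice endpoints and extreme vertices) is always "filled" by the convex hull; the rest is elementary two-dimensional geometry.
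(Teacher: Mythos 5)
Your overall strategy is the same as the paper's: containment follows from $\calW \subseteq \overline{\calV}_h(o)$ together with convexity of $\overline{\calV}_h(o)$ (a projection of an LP-feasible polytope), and the approximation property is argued band-by-band using the four slice-intersection points plus the extreme vertices $\check{\vv}_*,\hat{\vv}_*$. Your main case is correct, and in fact simpler than you make it: when $\check{v}_w^\sfA \le \alpha \le \hat{v}_w^\sfA$ the point $(w,\alpha)$ already lies on the segment between $\check{\vv}_w$ and $\hat{\vv}_w$, hence in $\wh{\calV}_h(o)$, with principal value $w > \beta-\delta$; the quadrilateral is not needed.

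The gap is in the boundary case, which is the only nontrivial part of the claim, and your proposal leaves it at the level of a plan (``interpolate between the nearer full slice line and the appropriate extreme vertex''). Concretely, suppose $\alpha > \max\bigl(\hat{v}_w^\sfA, \hat{v}_{w'}^\sfA\bigr)$. The triangle spanned by $\hat{\vv}_w$, $\hat{\vv}_{w'}$, $\hat{\vv}_*$ does contain points with agent-value exactly $\alpha$, but if $\hat{v}_*^\sfP$ were far below $w$, every such point could have principal value far below $\beta-\delta$, so the interpolation as described does not close the case. The missing step is to show this configuration cannot occur: if $\hat{\vv}_*$ lies outside the band $w \le v^\sfP \le w'$, then the segment from $\vv$ to $\hat{\vv}_*$ --- contained in $\overline{\calV}_h(o)$ by convexity --- crosses one of the two bounding slice lines at a point $\yy$ with $y^\sfA \ge \alpha$, which forces $\hat{v}_{y^\sfP}^\sfA \ge \alpha$ for $y^\sfP \in \{w,w'\}$ and contradicts the assumption that $\alpha$ exceeds both slice maxima. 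Hence $\hat{\vv}_*$ lies in the band, the whole triangle lies in the band, and the interpolated point at agent-value $\alpha$ has principal value at least $w \ge \beta-\delta$. This segment-crossing contradiction is exactly what the paper's Cases~2 and~3 supply; your ``careful bookkeeping'' must include it (and its mirror image with $\check{\vv}_*$ for $\alpha < \min\bigl(\check{v}_w^\sfA,\check{v}_{w'}^\sfA\bigr)$), otherwise the argument does not go through.
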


\begin{proof}[Proof of \Cref{clm:delta-approx}]
First, since $\calW \subseteq \overline{\calV}_h(o)$ by construction,
$\wh{\calV}_h(o) \subseteq \overline{\calV}_h(o) \subseteq \calV_h(o)$ holds readily.
It remains to show that for any $\vv \in \overline{\calV}_h(o)$ there exists $\xx \in \wh{\calV}_h(o)$ such that $x^\sfA = v^\sfA$ and $x^\sfP \ge v^\sfP - \delta$.

Let $\calB = \{ \vv' \in \mathbb{R}^2 : i \delta \le v'^\sfP \le (i+1) \delta \}$ be the band between two slice lines that contains $\vv$. 
Consider the relation between $v^\sfA$ and the agent's minimum and maximum values attained at $\calW \cap \calB$.
There can be the following possibilities.
\begin{itemize}
\item {Case 1.} 
$v^\sfA$ lies in between the minimum and maximum values, i.e.,
\[
\min_{\vv' \in \calW \cap \calB} v'^\sfA \le v^\sfA \le \max_{\vv' \in \calW \cap \calB} v'^\sfA.
\]
This means that there must be a point $\xx \in \mathrm{ConvexHull}(\calW \cap \calB)$ such that $x^\sfA = v^\sfA$. We have $\xx \in \mathrm{ConvexHull}(\calW \cap \calB) \subseteq \calB$.
So both $\vv$ and $\xx$ are inside $\calB$. According to the definition of $\calB$, this means $x^\sfP \ge v^\sfP - \delta$, as desired.

\item {Case 2.} 
$v^\sfA < \min_{\vv' \in \calW \cap \calB} v'^\sfA$.
In this case, it must be that $\check{\vv}_* \notin \calB$ (otherwise, $\min_{\vv' \in \calW \cap \calB} v'^\sfA = \check{v}_*^\sfA \le v^\sfA$). 
Now that $\vv \in \calB$, the line segment between $\vv$ and $\check{\vv}_*$ must intersect with the boundary of $\calB$ (i.e., one of the slice lines) at some point $\yy$. We have $y^\sfA \le v^\sfA$ (because $\check{v}_*^\sfA \le v^\sfA$ by definition) and $\yy \in \overline{\calV}_h(o)$ (because $\vv, \check{\vv}_* \in \overline{\calV}_h(o)$). 
Pick $\check{\vv}_w$ where $w = y^\sfP$. By definition $\check{v}_w^\sfA \le y^\sfA$. It follows that
\[
\check{v}_w^\sfA \le y^\sfA \le v^\sfA < \min_{\vv' \in \calW \cap \calB} v'^\sfA.
\]
This is a contradiction because we have $\check{\vv}_w \in \calW \cap \calB$ as $\yy$ is on the boundary of $\calB$.

\item {Case 3.}  
$v^\sfA > \max_{\vv' \in \calW \cap \calB} v'^\sfA$. An argument similar to that for Case~2 implies that this case is not possible, either.
\end{itemize}
Hence, only Case~1 is possible, where a desired point $\xx$ exists. The claim then follows.
\end{proof}

The fact that $\wh{\calV}_h(o) \subseteq \overline{\calV}_h(o)$ also implies that \Cref{eq:tV-omega-t-cons-v,eq:tV-omega-t-cons-ic,eq:tV-omega-t-cons-v-in-V} are satisfiable for every $\vv \in \wh{\calV}_h(o)$, as they are for every $\vv \in \overline{\calV}_h(o)$.
We next confirm that $\wh{\calV}_h(o)$ is eventually an $(\varepsilon + \delta)$-approximation of $\calV_h(o)$.
Indeed, now \Cref{clm:delta-approx} indicates that $\wh{\calV}_h(o)$ is an $\delta$-approximation of $\overline{\calV}_h(o)$, so $\wh{\calV}_h(o)$ is an $(\varepsilon + \delta)$-approximation of $\calV_h(o)$ as long as $\overline{\calV}_h(o)$ is an $\varepsilon$-approximation of $\calV_h(o)$.

To see that $\overline{\calV}_h(o)$ is an $\varepsilon$-approximation, consider an arbitrary $\vv \in \calV_h(o)$.
By \Cref{lmm:inducible-set}, $\vv$ can be induced by some $\varpi$ and $\vv'$ satisfying
\Cref{eq:tV-omega-t-cons-v,eq:tV-omega-t-cons-ic,eq:tV-omega-t-cons-v-in-V-exact}.
By assumption, every $\wh{\calV}_{h+1}(o')$ is an $\varepsilon$-approximation of $\calV_{h+1}(o')$, so 
for every onward vector $\vv'(\bar{\sigma}) \in \calV_{h+1}(o')$, there exists a vector $\tilde{\vv}'(\bar{\sigma}) \in \wh{\calV}_{h+1}(o')$ such that $\tilde{v}'^\sfP(\bar{\sigma}) \ge {v'}^\sfP(\bar{\sigma}) - \varepsilon$ and $\tilde{v}'^\sfA(\bar{\sigma}) = {v'}^\sfA(\bar{\sigma})$.
Using $\tilde{\vv}'$ instead of $\vv'$, the same policy $\varpi$ then induces a vector $\tilde{\vv} \in \wh{\calV}_h(o)$ to approximate $\vv$.
Indeed, the agent's values are exactly the same under $\tilde{\vv}'$ and $\vv'$, so the same response of the agent can be incentivized. This is why we require the approximation to not compromise on the agent's value.
Moreover, according to \Cref{eq:tV-omega-t-cons-v}, the overall difference between $\tilde{v}^\sfP$ and $v^\sfP$ is at most $\varepsilon$ because it holds for the coefficients that $\sum_{s, \vomega, \va} p_{h-1}(s, \vomega \given o) \cdot \varpi(\va \given \vomega ) = 1$.
As a result, $\tilde{v}^\sfP \ge v^\sfP - \varepsilon$ and $\overline{\calV}_h(o)$ is an $\varepsilon$-approximation of $\calV_h(o)$. 

Hence, the inductive hypothesis holds for $h$.
By induction, $\wh{\calV}_1(\varnothing)$ is an $\delta H$-approximation of $\calV_1(\varnothing)$. Since $\delta H = \epsilon$, we get that $\max_{\vv \in \wh{\calV}_1(\varnothing)} v^\sfP \ge \max_{\vv \in \calV_1(\varnothing)} v^\sfP - \epsilon$.
\end{proof}

\thmopt*

\begin{proof}
Consider the algorithm presented in \Cref{fig:forward}.
The outputs of the algorithm over all possible input sequences $(\sigma; \omega^\sfP, \wtomega^\sfA) \in \Sigma \times \Omega$ specify a policy $\pi$.
The polynomial running time of the algorithm for computing each $\pi(\cdot \given \sigma; \omega^\sfP, \wtomega^\sfA)$ follows by noting that it runs by solving at most $H$ linear constraint satisfiability problems.

It remains to argue that $\pi$ is IC and $\epsilon$-optimal.
Indeed, by \Cref{lmm:compute-calV} and an inductive argument, $\pi$ is IC at each time step $h$ and induces the corresponding values encoded in $\vv'$ as the expected onward values.
The $\epsilon$-optimality of $\pi$ follows given the condition $\max_{\vv \in \wh{\calV}_1(\varnothing)} v^\sfP \ge \max_{\vv \in \calV_1(\varnothing)} v^\sfP - \epsilon$ stated in \Cref{lmm:compute-calV} (and the choice of the initial $\vv$ in \Cref{fig:forward}). 
\end{proof}

\subsection{Omitted Proofs in \Cref{sc:learning}}

\lmmoptpihateps*

\begin{proof}
The proof is similar to the approach in \Cref{sc:backward-induction}, which computes a near-optimal and $0$-IC policy.
We describe the differences below.

Instead of maintaining two-dimensional sets of inducible values, we split the dimension of the agent's value into two dimensions $v^\sfA$ and $v_*^\sfA$, which represent the agent's values under his truthful response (i.e., $\perp$) and his best deviation plan, respectively.
Hence, each $\vv \in {\calV}(o)$ is now a tuple $(v^\sfP, v^\sfA, v_*^\sfA)$.
(In \Cref{sc:backward-induction}, $v^\sfA$ and $v_*^\sfA$ are eventually forced to be the same, so there is no need to keep an additional dimension.)

The inducibility of a vector $\vv = (v^\sfP, v^\sfA, v_*^\sfA)$ is characterized by the following constraints.
First, we impose the same constraint as \Cref{eq:tV-omega-t-cons-v} on the first two dimensions of $\vv$, so that they capture the players' payoffs under the agent's truthful response.
In order for the third dimension $v_*^\sfA$ to capture the agent's maximum attainable value, we use a constraint similar to \Cref{eq:tV-omega-t-cons-ic}: 
\begin{align}
&&v_*^\sfA \ge
\sum_{\omega^\sfA} 
p_{h-1}(\omega^\sfA \given o)\ \max_{\wtomega^\sfA}\
\sum_{a^\sfA}\ 
\max_{\tda^\sfA}\
\sum_{s, \omega^\sfP, a^\sfP}\
p_{h-1}(s, \omega^\sfP \given o, \omega^\sfA) \cdot 
\hltb{\varpi ( \va  \given \omega^\sfP, \wtomega^\sfA )} \cdot 
\nonumber\\
&&
\left( r_h^\sfA \Big(s, a^\sfP, \tda^\sfA \right)  + 
% \phantom{\sum_{~}}\right. 
% \qquad\qquad\qquad\qquad\qquad\qquad\qquad\qquad\qquad\qquad\qquad
% \left.
% \sum_{o'} \phi_h \left(o' \given s, a^\sfP, \tda^\sfA \right) \cdot 
\hltb{{v'_*}^\sfA(s, \vomega, \wtomega^\sfA, \va, \tda^\sfA)}  \Big). 
\label{eq:max-attainable}
\end{align}
The remaining constraint is the same as \Cref{eq:tV-omega-t-cons-v-in-V}. 
% eq:tV-omega-t-cons-pi

All the non-linear constraints can be linearized the same way as the approach described in \Cref{sc:backward-induction}.
Hence, we can efficiently approximate ${\calV_h}(o)$ by examining the inducibility of points on a sufficiently fine-grained grid in $[0,H]^3$, which contains $\poly(H, 1/\epsilon)$ many points, and constructing the convex hull of these points. (Note that there is no need to ensure zero compromise on the agent's value as required in the proof of \Cref{lmm:compute-calV}. This is because $\delta$-IC is defined with respect to the agent's expected value at the beginning of the game instead of that at every time step. Hence, using points on a grid suffices the purpose of the approximation in this proof.)
The half-space representation of the convex hull can be computed efficiently given that it is in $\mathbb{R}^3$ \cite{chan1996optimal}.
Eventually, an optimal $\pi \in \wh{\Pi}_\delta$ corresponds to a solution to $\max_{\vv \in {\calV}_1(\varnothing)} v^\sfP$ subject to $v^\sfA \ge v_*^\sfA - \delta$, and we can use the same forward construction procedure in \Cref{sc:forward-construction} to compute $\pi_h(\cdot \given \sigma; \omega^\sfP, \wtomega^\sfA)$.

Note that \Cref{eq:max-attainable} only enforces $v_*^\sfA$ as an upper bound of the maximum attainable value, instead of the exact value.
This suffices for our purpose because any $(v^\sfP, v^\sfA, v_*^\sfA)$ in the feasible set $\calV_1(\varnothing) \cap \left\{\vv: v^\sfA \ge v_*^\sfA - \epsilon \right\}$ also implies the inclusion of $(v^\sfP, v^\sfA, \bar{v}_*^\sfA)$ in the same feasible set, where $\bar{v}_*^\sfA$ is the actual maximum attainable value induced by the policy that induces $(v^\sfP, v^\sfA, v_*^\sfA)$ according to our formulation.
\end{proof}

\thmlearning*

\begin{proof}
We run reward-free exploration to obtain a model $\whp$ with error bound $\delta/2$. 
This can be achieved w.h.p. in $\wt{\calO}(\zeta / \delta^2)$ episodes according to \Cref{lmm:Pi-hat-bounds}.
Next, we compute an $\delta$-optimal strategy $\pi \in \wh{\Pi}_{\delta}$ and use it in the remaining rounds. 
According to \Cref{lmm:opt-Pi-hat-eps}, this can be done in polynomial time.

By assumption, rewards are bounded in $[0,1]$ so the regrets are at most $1$ for both players in each of the exploration episodes.
In each of the remaining episodes, the agent's regret is as follows, where we pick arbitrary $\rho^* \in \argmax_{\rho} V_1^{\sfA,\pi,\rho}(\varnothing)$: 
\begin{align*}
V_1^{\sfA,\pi,\rho^*}(\varnothing) - V_1^{\sfA,\pi,\perp}(\varnothing) 
\; \le\; &
\underbrace{
\abs{\wh{V}_1^{\sfA,\pi,\rho^*}(\varnothing) - \wh{V}_1^{\sfA,\pi,\perp}(\varnothing)}
}_{\le \delta \text{ as } \pi \in \wh{\Pi}_{\delta}} \; +\; \\ 
&
\underbrace{
\abs{\wh{V}_1^{\sfA,\pi,\rho^*}(\varnothing) - V_1^{\sfA,\pi,\rho^*}(\varnothing)} 
\; +\; 
\abs{\wh{V}_1^{\sfA,\pi,\perp}(\varnothing) - V_1^{\sfA,\pi,\perp}(\varnothing)}
}_{\le \delta \text{ by \Cref{lmm:Pi-hat-bounds}}}
\; \le\; 
2 \delta.
\end{align*}
The principal's regret is: 
\begin{align*}
V^* - V_1^{\sfP,\pi,\perp}(\varnothing)
&\;=\;
\max_{\pi' \in \Pi_0} V_1^{\sfP,\pi',\perp}(\varnothing) - V_1^{\sfP,\pi,\perp}(\varnothing) \\
&\hspace{-8mm}
\underbrace{\;\le\; 
\max_{\pi' \in \wh{\Pi}_{\delta}} V_1^{\sfP,\pi',\perp}(\varnothing)}_{\text{as } \Pi_0 \subseteq \wh{\Pi}_{\delta}} - V_1^{\sfP,\pi,\perp}(\varnothing) 
\;\le\; 
\underbrace{\max_{\pi' \in \wh{\Pi}_{\delta}} \wh{V}_1^{\sfP,\pi',\perp}(\varnothing) - \wh{V}_1^{\sfP,\pi,\perp}(\varnothing)}_{\le \delta \text{ as $\pi$ is $\delta$-optimal}} \;+\; \delta
\;\le\; 
2\delta.
\end{align*}
The reason that $\Pi_0 \subseteq \wh{\Pi}_{\delta}$ is the following: 
Since $\whp$ ensures error bound $\delta/2$, we have $\abs{\wh{V}_1^{\sfA,\pi',\rho}(\varnothing) - V_1^{\sfA,\pi',\rho}(\varnothing)} \le \delta/2$ for all $\rho$.
By definition, $\pi' \in \Pi_0$ means that $V_1^{\sfA,\pi',\perp}(\varnothing) \ge V_1^{\sfA,\pi',\rho}(\varnothing)$. 
So, $\wh{V}_1^{\sfA,\pi',\perp}(\varnothing) \ge \wh{V}_1^{\sfA,\pi',\rho}(\varnothing) - \delta$ for all $\rho$; hence, $\pi' \in \wh{\Pi}_{\delta}$.

The above bounds then lead to a total regret of at most $\wt{\calO}(\zeta / \delta^2) + \calO(T\delta)$ for each player. 
Taking $\delta = (\zeta /T)^{1/3}$ gives the upper bound $\wt{\calO}(\zeta^{1/3} T^{2/3})$.
\end{proof}

\section{Complete Formulation of the Linear Constraints Satisfiability Problem}
\label{sec:lp}

The complete formulation of the linear constraint satisfiability problem in \Cref{sc:backward-induction}, resulting from the linearization of \Cref{eq:tV-omega-t-cons-v,eq:tV-omega-t-cons-ic}, is as follows,
where $\varpi$, $\zz = (z^\sfA, z^\sfP)$, and $y$ are the variables (highlighted in blue).

\begin{enumerate}[leftmargin=5mm,itemsep=1mm]
\item The value function constraint:
\begin{align*}
&\vv =
\sum_{s, \vomega, \va} 
p_{h-1}(s, \vomega \given o) \cdot \Big( \rr_h(s, \va) \cdot \hltb{\varpi({\va} \given \vomega)} +  
\hltb{\zz(s, \vomega, \omega^\sfA, \va, a^\sfA)}  \Big).
\end{align*}

\item An IC constraint for each $\omega^\sfA \in \Omega^\sfA$:
\begin{align*}
& 
\sum_{s, \omega^\sfP, \va}
p_{h-1}(s, \omega^\sfP \given o, \omega^\sfA) \cdot \Big( r_h^\sfA(s, \va)  \cdot \hltb{\varpi ( {\va} \given \vomega )} + 
\hltb{z^\sfA \left(s, \vomega, \omega^\sfA, \va, a^\sfA \right)}  \Big) 
\ge 
\sum_{ a^\sfA \in A^\sfA } \hltb{y\left( a^\sfA, \omega^\sfA, \wtomega^\sfA \right)}. 
\end{align*}
Moreover, for each tuple $(a^\sfA,\omega^\sfA,\wtomega^\sfA) \in A^\sfA \times \Omega^\sfA \times \Omega^\sfA$:
\begin{align*}
\hltb{y \left(a^\sfA, \omega^\sfA, \wtomega^\sfA \right)} \ge 
\sum_{s, \omega^\sfP, a^\sfP}\
p_{h-1}(s, \omega^\sfP \given o, \omega^\sfA) 
\Big( r_h^\sfA \left(s, a^\sfP, \tda^\sfA \right) \cdot \hltb{\varpi ( \va  \given \omega^\sfP, \wtomega^\sfA )}  + 
\hltb{z^\sfA\left(s, \vomega, \wtomega^\sfA, \va, \tda^\sfA \right)}  \Big).
\end{align*}

\item An onward value constraint for each tuple $(s, \vomega, \wtomega^\sfA, \va, \tda^\sfA) \in \bar{\Sigma}$:
\begin{align*}
&
\mathbf{H}\left(s, a^\sfP, \tda^\sfA \right) \cdot  \hltb{\zz(s, \vomega, \wtomega^\sfA, \va, \tda^\sfA)} 
\le \hltb{\varpi ( \va  \given \vomega )} \cdot \bb\left(s, a^\sfP, \tda^\sfA \right), 
\end{align*}
where for every $o \in O$, the matrix $\mathbf{H}(o)$ and vector $\bb(o)$ are given by the half-space representation of $\wh{\calV}_{h+1} (o)$, i.e.,  $\wh{\calV}_{h+1} (o) = \left\{\vv' \in \mathbb{R}: \mathbf{H}(o) \cdot \vv' \le \bb(o) \right\}$.

\item 
Additionally, we impose
\[
\hltb{\varpi(\cdot \given \vomega)} \in \Delta(A)
\]
for each $\vomega \in \Omega$
to ensure that $\varpi(\cdot \given \vomega)$ is a valid distribution over $A$.
\end{enumerate}

\section{Additional Discussion about Intractability without Hindsight Observability}
\label{sec:hardness}

The PSPACE-hardness can be seen by thinking of a POMDP as an instance of our problem where only the principal can make observations and perform actions to influence the environment (essentially, the agent can neither influence the principal nor the environment in this instance).

The PSPACE-hardness remains in the case of information design, where the principal observes the state directly but does not act, while the agent makes no observation but acts; as well as the case of mechanism design, where the agent observes the state directly but does not act, while the principal does not observe but acts.
This can be seen by considering zero-sum instances, where the principal's and the agent's rewards sum to zero. 

More specifically, consider for example the case of information design.
If the goal is to compute the principal's maximum attainable payoff, the PSPACE-hardness of the problem is immediate: Since the game is zero-sum, it is optimal for the principal to not send no signal (if signaling were to improve the principal's payoff, the agent would be better-off just ignoring the signals). 
Hence, computing the maximum attainable payoff of the principal in this case is equivalent to computing (the negative of) the agent's maximum attainable payoff, which amounts to solving a POMDP.

One may argue that while the above example demonstrates the hardness of determining the principal's maximum attainable payoff, computing the principal's optimal policy is actually trivial in the example (i.e., sending no signal is optimal). So it does not rule out the possibility of an efficient algorithm which, given any sequence, computes the signal distribution of an optimal policy, without computing the principal's payoff the policy yields.
It turns out that this is not possible, either. 

Consider a game where the agent can choose between two actions $a$ and $b$ in the first time step. Action $a$ leads to a process where the principal's rewards are zero for all state-action pairs.
Action $b$ leads to another process with payoffs $1-x$ for the principal and $x$ for the agent, where $x \in [0,1]$ depends on the principal's signaling strategy in this sub-process. 
For example, we can design this sub-process as a matching pennies game, where: nature flips a fair coin, the principal observes the outcome, and the agent must choose the same side of the coin to get a reward $1$ and otherwise he gets $-1$.
If the agent plays this matching pennies game on his own, his expected payoff is $0$. The principal can reveal her observation to help the agent to improve the payoff. And the principal can do so probabilistically, so that she can fine tune the agent's expected payoff $x$ to any desired value in $[0,1]$. 
To maximize the principal's payoff in the entire process requires finding an $x$ that is sufficiently high, so that the agent is incentivized to choose $b$ (otherwise, the principal only gets $0$); at the same time, we would like $x$ to be as low as possible to maximize the principal's payoff $1-x$.
This essentially requires knowing the agent's maximum attainable payoff in the sub-process following $a$, which is PSPACE-hard as we discussed above.

\end{document}